\tikzset{state/.style={circle, draw, minimum size=0.3cm, initial distance=0.2cm}}
\newcommand{\nat}{\mathbb{N}}
\newcommand{\real}{\mathbb{R}}
\newcommand{\nnreal}{\real_{\geq 0}}
\newcommand{\id}{\mathbf{I}}
\newcommand{\pr}{\mathbf{Pr}}
\newcommand{\abs}[1]{\vert #1 \vert}
\newcommand{\nm}[1]{\Vert #1 \Vert}
\renewcommand{\vec}[1]{\ensuremath{\underline{#1}}}
\newcommand{\assert}{\mathcal{A}}
\newcommand{\sys}{\mathcal{M}} % system
\newcommand{\sysStates}{\mathcal{X}} % system
\newcommand{\sysIns}{\mathcal{I}} % system inputs
\newcommand{\sysParas}{\mathcal{D}} % system inputs
\newcommand{\sysTran}{\mathcal{T}} % system inputs
\newcommand{\sysState}{X} % system
\newcommand{\sysInit}{X_\mathrm{init}} % system inputs
\newcommand{\sysIn}{I} % system inputs
\newcommand{\sysPara}{D} % system inputs
\newcommand{\sig}{\sigma}
\renewcommand{\G}{\Box}
\newcommand{\F}{\Diamond}
\renewcommand{\U}{\mathbin{\mathcal{U}}}
\newcommand{\True}{\ensuremath{\mathtt{T}}}
\newcommand{\False}{\ensuremath{\mathtt{F}}}
\newtheorem{assumption}{Assumption}
\newtheorem{definition}{Definition}
\newtheorem{theorem}{Theorem}
\newtheorem{remark}{Remark}
\newtheorem{example}{Example}
\newtheorem{lemma}{Lemma}
\crefname{section}{Section}{Sections}
\crefname{subsection}{Section}{Sections}
\crefname{definition}{Definition}{Definitions}
\crefname{proposition}{Proposition}{Propositions}
\crefname{lemma}{Lemma}{Lemmas}
\crefname{theorem}{Theorem}{Theorems}
\crefname{corollary}{Corollary}{Corollaries}
\crefname{example}{Example}{Examples}
\crefname{figure}{Figure}{Figures}
\crefname{condition}{Condition}{Conditions}
\crefname{remark}{Remark}{Remarks}
\crefname{running}{Running Example}{Running Examples}
\crefname{algorithm}{Algorithm}{Algorithms}
\crefname{table}{Table}{Tables}
\begin{document}

\title{Probabilistic Conformance for Cyber-Physical~Systems}

\author{Yu Wang}
\email{yu.wang094@duke.edu}
\orcid{0000-0002-0431-1039}
\affiliation{%
  \institution{Duke University}
  \streetaddress{100 Science Dr.}
  \city{Durham}
  \state{NC}
  \postcode{27708}
}

\author{Mojtaba Zarei}
\email{mojtaba.zarei@duke.edu}
\orcid{}
\affiliation{%
  \institution{Duke University}
  \streetaddress{100 Science Dr.}
  \city{Durham}
  \state{NC}
  \postcode{27708}
}

\author{Borzoo Bonakdarpoor}
\email{borzoo@msu.edu}
\orcid{0000-0003-1800-5419}
\affiliation{%
  \institution{Michigan State University}
  \streetaddress{428 S. Shaw Ln.}
  \city{East Lansing}
  \state{MI}
  \postcode{48824}
}

\author{Miroslav Pajic}
\email{miroslav.pajic@duke.edu}
\orcid{0000-0002-5357-0117}
\affiliation{%
  \institution{Duke University}
  \streetaddress{100 Science Dr.}
  \city{Durham}
  \state{NC}
  \postcode{27708}
}

% make the title area

% As a general rule, do not put math, special symbols or citations
% in the abstract or keywords.
\begin{abstract}
%Conformance is an important concept in system analysis. It 
In system analysis, \emph{conformance} 
indicates that two systems simultaneously satisfy the same set of specifications of interest; thus, the results from analyzing one system automatically transfer to the other, or one system can safely replace the other in practice. In this work, we study the probabilistic conformance of cyber-physical systems (CPS). We propose a notion of (approximate) probabilistic conformance for sets of complex specifications expressed by the Signal Temporal Logic (STL). Based on a novel statistical test, we develop the first statistical verification methods for the probabilistic conformance of a wide class of CPS. Using this method, we verify the conformance of the startup time of the widely-used full and simplified model of Toyota powertrain systems, the settling time of model-predictive-control-based and neural-network-based automotive lane-keeping controllers, as well as the maximal voltage deviation of full and simplified power grid systems.
\end{abstract}

\maketitle

\section{Introduction}
\label{sec:intro}

{\em Conformance} is an important concept in the analysis of cyber-physical systems (CPS)~\cite{ryabtsev2009translation,majumdar2013compositional,khakpour_NotionsConformanceTesting_2015,roehm_ReachsetConformanceTesting_2016,liu_ReachsetConformanceForward_2018,graf2019component}. It indicates that two systems satisfy the same set of given specifications (e.g., reachability or input-output relation). Thus the analysis results for one system can transfer to the other system, or one system can safely replace the other in practice. The term ``conformance'' may also refer to the consistency between a system and a design specification (e.g.,~\cite{heerink_FormalMethodsConformance_1996,lopez_SpecificationTestingImplementation_2006}); this is out of the scope of this~work.

For CPS, complex specifications for their dynamics are mathematically expressible by temporal logics, such as the Signal Temporal Logic (STL)~\cite{maler2004monitoring}. Following the line of work~\cite{abbas_FormalPropertyVerification_2014,deshmukh_QuantifyingConformanceUsing_2017}, we focus on the conformance of CPS for temporal logics specifications. This notion of conformance generalizes the conformance for reachability~\cite{roehm_ReachsetConformanceTesting_2016,liu_ReachsetConformanceForward_2018}, since reachability is expressible by temporal logic.

Conformance can be used for two different models derived from the same system under two conditions, implying that the system executes in the same way under the conditions (e.g., two inputs). A well-known example of nonconformity is the Volkswagen emissions scandal~\cite{barrett2015impact}, where the emission control software deliberately performs differently in the lab testing and driving conditions to bypass the emission test without actually reducing the pollution generated from the cars while driving. Similar undesirable nonconformity exists in printers~\cite{barthe_FacetsSoftwareDoping_2018}, where the software drivers deliberately work differently in favor of certain cartridge brands. To prevent such {\em software doping}~\cite{pinisetty_RuntimeVerificationHyperproperties_2018}, one needs to verify the conformance of a system under different conditions/settings.

The conformance also applies to two models derived from two systems operating under the same conditions, implying that they are interchangeable for the application. For instance, there has been recently significant interest in replacing precise but computationally expensive controllers based on model predictive control (MPC) with ones based on neural network (NN) for applications such as lane-keeping systems in autonomous vehicles~\cite{pan2017agile}. To migrate from an MPC controller to an NN controller without significantly changing the responsiveness, we need to check the conformance of the closed-loop system under the two controllers for their settling time, especially considering the fragility of AI-based controllers. While we focus on the conformance of two different systems operating under the same conditions in our case studies, our approach also applies to a system's conformance under two conditions.

Since CPS, such as autonomous vehicles, are frequently subject to randomness (e.g., system/network/environment noise), we propose a {\em probabilistic} notion of conformance for these systems. We use the definition of {\em probabilistic uncertain systems} (PUSs) from~\cite{wang_StatisticalVerificationHyperproperties_2019} to capture CPS dynamics. Roughly speaking, they are {\em grey-box} probabilistic dynamical systems with unknown dynamics in known state space. The PUSs capture the system nondeterminism as the input and probabilism as the parameters. The input and parameters can be time functions of general types, including real, integer, or categorical/Boolean. Given the input and parameters' value, a time-dependent sample path of general types can be generated. The PUSs subsume commonly used dynamical models such as continuous-time Markov chains and hybrid I/O automata~\cite{henzinger2000theory} with probabilistic parameters (used to capture the Toyota Powertrain~\cite{roohi_StatisticalVerificationToyota_2017}).

We define the notion of conformance through a parameterized signal temporal logic (STL) formula~\cite{asarin2011parametric} as illustrated in \cref{fig:chart1}. Specifically, we require that the satisfaction probabilities are approximately equal for all values of the STL parameters. For example, for the probabilistic conformance of two models $\sys_1$ and $\sys_2$ of reaching the same set $\mathcal{D}$, one can consider the parameterized formula $\F_{[0, t]} \mathcal{D}$ and require that for a given $c > 0$, it holds that
\[
\begin{split}
\forall t \in [0, \infty). \
\big| & \pr_{\sig_1 \sim \sys_1} (\sig_1 \models \F_{[0, t]} \mathcal{D} ) 
\\ & \quad - \pr_{\sig_2 \sim 
\sys_2} (\sig_2 \models \F_{[0, t]} \mathcal{D} ) \big| < c;        
\end{split}
\]
here, $\sig_1$ and $\sig_2$ are two random signals from the models $\sys_1$ and $\sys_2$, respectively, as illustrated in \cref{fig:hitting}. This implies that both systems $\sys_1$ and $\sys_2$ reach $\mathcal{D}$ with approximately equal probability for any time horizon. Our notion of conformance only requires these probabilities to be \emph{approximately} equal instead of \emph{exactly} equal, since the former is usually sufficient in practice (more examples are provided in \cref{sec:eval}).

\begin{figure}[t!]
    \centering
    \includegraphics[width=0.998\columnwidth]{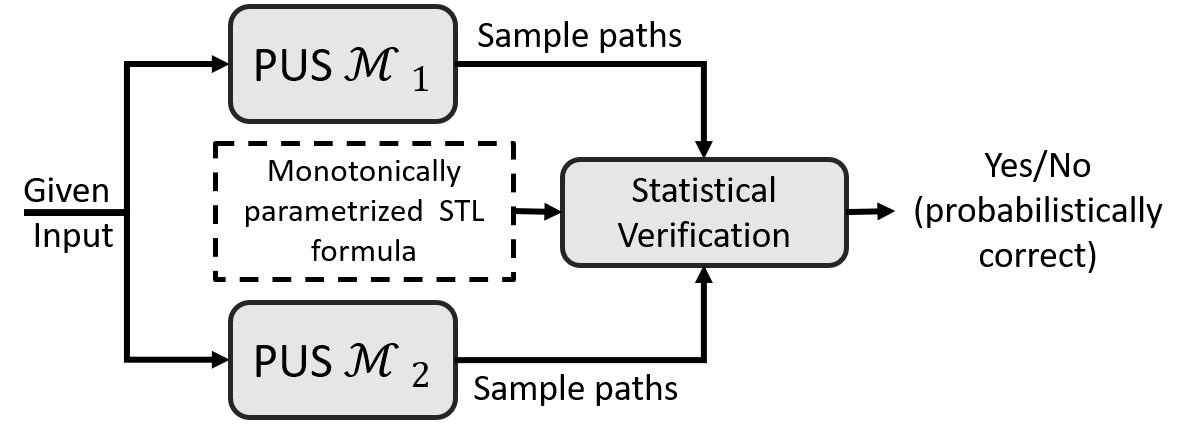}
    \caption{Overview of our statistical conformance~test.}
    \label{fig:chart1}
\end{figure}

Since the PUSs may have complex or even unknown dynamics, we adopt a statistical verification approach, as it scales better than model-based verification approaches and can handle unknown dynamics~\cite{agha_SurveyStatisticalModel_2018,larsen_StatisticalModelChecking_2016}.  From the conformance definition, we need to simultaneously handle the approximately equal satisfaction probability of infinitely many STL specifications since the parameters of the parameterized STL formula can take infinitely many values; this is very challenging since existing statistical verification methods can only handle a single (non-parametrized) temporal logic formula~\cite{legay_StatisticalModelChecking_2010,agha_SurveyStatisticalModel_2018} or a hyper temporal logic formula~\cite{wang2021csf,wang_StatisticalVerificationHyperproperties_2019}.

We show that statistically verifying conformance is feasible when the STL formula is \emph{monotonically} parameterized, i.e., the formula's satisfaction probability changes monotonically with the parameters. Such a property holds for many cases as discussed in detail in \cref{sec:formulation} and the case studies in \cref{sec:eval}. To the best of our knowledge, this work is the first to enable statistical verification for infinitely many formulas.

Due to monotonicity, the satisfaction probabilities over the values of its parameters on the two PUSs form two probability distributions. Accordingly, the conformance of two PUSs requires the two distributions to be approximately equal. To this end, we develop a new statistical test to check the \emph{approximate} equality of two distributions with provable confidence levels. Our test is based on the classic Kolmogorov-Smirnov (KS) test~\cite{deshpande_NonparametricStatisticsTheory_2018} and its multivariate generalization~\cite{peacock_TwodimensionalGoodnessoffitTesting_1983} for checking the \emph{exact} equality of two distributions. Based on this, we develop a statistical verification method for the probabilistic (non)conformance of two PUSs for any desired confidence level (lower than 1).

We apply the proposed statistical verification method to check the probabilistic conformance for three case studies to show its applicability. First, we study the probabilistic conformance of the widely used full and simplified models of the Toyota powertrain system~\cite{jin_PowertrainControlVerification_2014,roohi_StatisticalVerificationToyota_2017} for the startup time for their air to fuel ratio to reach a working region. Our results show the {\em nonconformity} of the two models, suggesting the simplified model may not capture certain important aspects of the system. Second, we check the probabilistic conformance of the settling time of an MPC-based lane-keeping controller and several NN-based lane-keeping controllers of different sizes for an autonomous car~\cite{MPCToolbox}. We show that NN-based controllers conform to the MPC-based controller, as their size increases; however, a small NN design may result in nonconformity. It suggests that an MPC-based controller can be replaced with a sufficiently-large NN-based controller to satisfyingly control the settling time.  Finally, we check the probabilistic conformance of the maximal deviation of DC voltage between the full model and a simplified model of a power grid system~\cite{PowerSim}. Our results show that the two models do not probabilistically conform -- i.e., the simplified model again may not capture certain important aspects of the system.

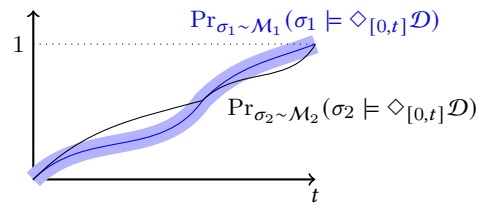
\begin{figure}[t!]
\centering
\begin{tikzpicture}[scale=1.50]
    \draw [->, thick] (0, 0) -- (0, 1.5);
    \draw [->, thick] (0, 0) -- (2.5, 0) node[below] () {$t$};
    \draw [line width=0.25cm, blue!30] (0, 0) to[out=45,in=-125] (1.5, 0.7) to[out=45,in=-160] (2.5, 1.2) node[above] () {};
    \draw [blue] (0, 0) to[out=45,in=-125] (1.5, 0.7) to[out=45,in=-160] (2.5, 1.2) node[above] () {$\pr_{\sig_1 \sim \sys_1} (\sig_1 \models \F_{[0, t]} \mathcal{D} )$};
    \draw [] (0, 0) to[out=45,in=-165] (1.5, 0.7) to[out=45,in=-120] (2.5, 1.2) node[below, yshift=-0.6cm, xshift=0.5cm] () {$\pr_{\sig_2 \sim \sys_2} (\sig_2 \models \F_{[0, t]} \mathcal{D})$};
    \draw [dotted] (0, 1.2) node[left] () {$1$} -- (2.5, 1.2);
    % \draw [dotted] (1.5, 0) node[below] () {$t_0$} -- (1.5, 0.7);
    \end{tikzpicture}
\caption{Reachability probabilities for some set $\mathcal{D}$ v.s. time horizon $t$. The two models conform (for reachability) if the black line stays within the blue tube.}
\label{fig:hitting}
\end{figure}

This paper is organized as follows. After preliminaries in \cref{sec:prelim}, in \cref{sec:formulation} we formalize the problem and our definition of probabilistic conformance for a parameterized STL formula. We present a new statistical test in \cref{sec:EKS}
and the verification method for the probabilistic conformance in \cref{sec:smc}. In \cref{sec:eval}, we apply our method to three real-world case-studies, before discussing related work in \cref{sec:related}, and concluding in \cref{sec:conc}.

\paragraph{Notation} We denote the sets of natural, real, and non-negative real numbers 
by $\nat$, $\real$, and $\nnreal$, respectively. We define $\real_\infty = \real \cup \{-\infty, \infty\}$, and $[n] = \{1,\ldots,n\}$, for $n \in \nat$. The cardinality and the power set of a set $S$ are denoted by $\abs{S}$ and $2^{S}$.

\section{Problem Formulation} \label{sec:prelim}

We use a general system model for CPS called \emph{probabilistic uncertain systems} (PUSs)~\cite{wang_StatisticalVerificationHyperproperties_2019}. They capture continuous-time probabilistic dynamics on a hybrid state-space of discrete and continuous values, as well as generalize common probabilistic models such as continuous-time Markov chains (CTMC) and probabilistic hybrid I/O automata~\cite{wang_StatisticalVerificationHyperproperties_2019}. Since we adopt a statistical approach, we mainly view a PUS as a \emph{grey-box} that
generates random samples (\cref{fig:pus}).

\begin{definition} \label{def:PUS}
A probabilistic uncertain system (PUS) is a tuple $\sys = (\sysStates, \sysInit, \sysIns, \sysParas, \allowbreak \{\sysPara(t)\}_{t \in \nnreal}, \allowbreak \sysTran)$, where
\begin{itemize}
\item $\sysStates = \sysStates_1 \times \ldots \times \sysStates_n$ 
is the \emph{state space}
with each $\sysStates_i$ being either $\real$ or a discrete set 
$[n_{\sysStates_i}]$;

\item $\sysInit \in \sysStates$ is the \emph{initial state};

\item $\sysIns = \sysIns_1 \times \ldots \times \sysIns_m$ is the 
\emph{range of inputs} with each $\sysIns_i$ being either $\real$ or a discrete set $[n_{\sysIns_i}]$; 

\item $\sysParas = \sysParas_1 \times \ldots \times \sysParas_l$ is the 
\emph{range of parameters} with each $\sysParas_i$ being either $\real$ or a discrete set $[n_{\sysParas_i}]$;

\item $\{\sysPara(t)\}_{t \in \nnreal}$ is a random process on $\sysParas$ (for a properly defined probability space), defining the random change of the parameter over time;

\item $\sysTran: (\nnreal \to \sysIns) \times (\nnreal \to \sysParas) \to (\nnreal \to \sysStates)$ defines
the \emph{transition} of the system -- i.e., given the (time-dependent) value of the input and parameter,
the system deterministically generates a path.

\end{itemize}
\end{definition}

Given the value of the (time-dependent) \emph{input} 
$\sysIn: \nnreal \to \sysIns$, 
the PUS can generate a random \emph{signal}
$\sig(t) = \sysTran(\sysIn(t), \sysPara(t))$,
where the randomness comes from the parameter $\sysPara(t)$.
We denote by $\sig \sim \sys_\sysIn$ when the signal $\sig$ is randomly generated from the system $\sys$ for the given input $\sysIn$.
We also write $\sig \sim \sys$ if $\sysIn$ is clear
from~the~context.

There is no assumption on the dynamics of a PUS, such as Markovian, causal, etc. Common probabilistic models such as the discrete-time or continuous-time Markov chains~\cite{trivedi_ReliabilityAvailabilityEngineering_2017}, and probabilistic hybrid I/O automata~\cite{sproston_DecidableModelChecking_2000,zhang_SafetyVerificationProbabilistic_2010} are subsumed by the notion of PUS 
(see~\cite{wang_StatisticalVerificationHyperproperties_2019} for details).

\begin{figure}[!t]
\centering
\begin{tikzpicture}
	\draw (0, 0) ellipse (2 and 1);
	\node at (1.8, 0) {$\sysStates$};
	\draw [->, thick] (-1,-0.5) node[above, align=left] {$\sysState(t)$\\$t 
\in \nnreal$} to[out=0,in=-135] (0.7, 0.2);
	\draw [dotted] (-.5, .5) to (1.5, -.5);
	\draw [dotted] (.5, 0) to (.5, 1.3);
	\draw [dashed] (-.5, .55) to[out=-45,in=135] (0.6, 1) 
to[out=-45,in=165] 
(1.5, -.45);
	\node at (1.8, 0) {$\sysStates$};
	\node at (0, 1.4) {PUS $\sys$};
	\draw[->] (-3, 0) node[align=left] {Input\\$\sysIn(t) \in \sysIns$} to 
(-2.2, 0);
\end{tikzpicture}
\caption{Probabilistic Uncertain System (PUS).\label{fig:pus}}
\end{figure}
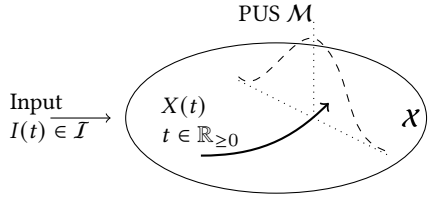

\begin{example}
A simple example of PUS is a bouncing ball
with random gravitational acceleration,
as shown in \cref{fig:bouncing}.
Its state is the height and velocity $(x, v)$.
For $x > 0$, the state evolves by
$\dot{x} = v, \dot{v} = g$;
for $x = 0$, it jumps by
$x \mapsto x, v \mapsto -v$.
The parameter $g$ is randomly drawn from
a normal distribution $N(g_0, \sigma^2)$ for some $g_0, \sigma > 0$.
The initial state is $(x_0, 0)$.
The input set is empty.
\end{example}

Finally, note that although by \cref{def:PUS}, a PUS has a unique initial state, it allows for defining conformance of paths from different initial states $\sysState_1$ and $\sysState_2$ of the PUS.
This is done by adding a new initial state $\sysState_0$ to the PUS, and model the transition from $\sysState_0$ to $\sysState_1$ and $\sysState_2$ as 
part of the input.

\paragraph*{Signal Temporal Logic}
We use the {\em signal temporal logic} (STL)~\cite{maler2004monitoring}
to capture the temporal specifications of interest 
for the random signals of the PUS.
STL can be viewed as the counterpart of linear temporal logic (LTL) in the real-time domain with real-valued constraints.
An STL formula is defined inductively by the syntax
\begin{equation} \label{eq:syntax}
	\varphi \Coloneqq \;
	f > 0 \; 
	\mid \neg \varphi \;
	\mid \varphi \land \varphi \;
	\mid \varphi \U_{[t_1, t_2]} \varphi, 
\end{equation}
where $f: \real^n \to \real$ is a given function.
To simplify further discussion, 
we let $t_1, t_2 \in \real_\infty$, 
instead of taking values in nonnegative rational numbers.
We call $f > 0$ an \emph{atomic proposition} and $\U_{[t_1, t_2]}$ the ``until'' operator.
Other temporal and logic operators are defined {as usual; for example,}
\begin{itemize}
	\item (false/true) $\False = \varphi \land (\neg \varphi)$ and $\True = \neg \False$,
% 	\item (disjunction) $\varphi \land \varphi = \neg (\neg \varphi \lor \neg \varphi)$, \todo{disjunction is defined but not other logic operators such as $\Rightarrow$; why not just say that the other logic operators are defined in a common way and remove disjunction}
	\item (finally) $\F_{[t_1,t_2]} \varphi = 	\True \ \U_{[t_1,t_2]} \varphi$, and
	\item (always) $\G_{[t_1,t_2]} \varphi = \neg (\F_{[t_1,t_2]} \neg\varphi)$.
\end{itemize}

For a concrete \emph{signal} $\sig: \nnreal \to \real^n$ of the PUS,
the satisfaction relation for STL formulas is defined recursively by
the semantics
\[
\begin{array}{l@{\hspace{0.5em}}c@{\hspace{0.5em}}l}
\sig \models f > 0 & \textrm{ iff } & f(\sig(0)) > 0\\ 
\sig \models \neg \varphi & \textrm{ iff } & \sig \not\models \varphi \\ 
\sig \models \varphi_1 \land \varphi_2 & \textrm{ iff } & \sig \models \varphi_1 
\textrm{ and } \sig \models \varphi_2 \\
 \sig \models \varphi_1  \U_{[t_1, t_2]} \varphi_2 & \textrm{ iff } &
\textrm{there exists } t \in [t_1, t_2] \textrm{ such that }\\
& & \hspace{0.2cm} \sig^{(t)} \models 
\varphi_2  \textrm{ and for any } 0 \leq t' < t, \\
& & \hspace{0.2cm} \textrm{ it holds that } \ 
\sig^{(t')} \models \varphi_1;
\end{array}
\]
here, $\sig^{(t)}$ denotes the $t$-shift of the signal, defined by $\sig^{(t)}(t') = 
\sig(t + t')$ for any $t' \in \nnreal$.

We make the \textbf{convention} that a formula $\varphi_1  \U_{[t_1, t_2]} \varphi_2$ 
is equivalent to $\False$, if $t_2 < t_1$, $t_1 < 0$, or $t_2 < 0$. 

\begin{example}
The following STL formula requires that if $\abs{x} > 0.5$, then within $0.6$ time units $\abs{x}$ settles under the value $0.5$ for the $1.5$-long time~interval 
$$\varphi = \G\Big(\abs{x}  > 0.5 \Rightarrow \F_{[0,.6]}(\G_{[0,1.5]} 
\abs{x} < 0.5) \Big).$$
\end{example}

\begin{figure}[t!]
\centering
\includegraphics[width=0.4\columnwidth]{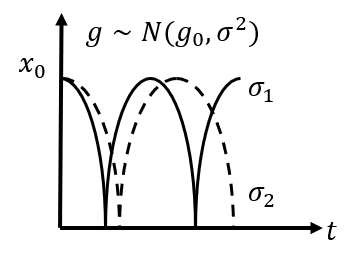}
\caption{Stochastic bouncing ball.}
\label{fig:bouncing}
\end{figure}

\section{Probabilistic Conformance} % and Problem Formulation}
\label{sec:formulation}

% Following the line of work from~\cite{abbas_FormalPropertyVerification_2014,deshmukh_QuantifyingConformanceUsing_2017}, 
We focus on a class of conformance properties for CPS for \emph{an (infinite) set of STL formulas}.
Mathematically, we say that two PUSs probabilistically conform if for \textbf{any} STL formula from the set, the satisfaction probabilities are \emph{approximately equal} for two random signals drawn respectively from the two PUSs.
This can be viewed as a probabilistic generalization of~\cite{abbas_FormalPropertyVerification_2014,
deshmukh_QuantifyingConformanceUsing_2017}.

% \begin{figure}[!t]
% 	\centering
% 	\begin{tikzpicture}
% 		\node[draw, rectangle, align=left] at (0, 0) (s1) {PUS $\sys$};
% 		\node[draw, rectangle, align=left] at (0, 1) (s2) {PUS $\sys$};
% 		\node[draw, dashed, rectangle, align=center] at (3.5, 0.5) (smc) {$\forall \varphi \in \Phi, \ \pr(\vec{\sysState_1} \models \varphi)$\\$\approx_d \pr(\vec{\sysState_2} \models \varphi)$};
		
% 		\draw[->] (-1.3, 0) node[align=left] {Input\\$\sysIn_1 \in \sysIns$} to (s1);
% 		\draw[->] (-1.3, 1) node[align=left] {Input\\$\sysIn_2 \in \sysIns$} to (s2);
% 		\draw[->] (s2) to node[above] {$\vec{\sysState_1}$} (smc);
% 		\draw[->] (s1) to node[below] {$\vec{\sysState_2}$} (smc);

% 	\end{tikzpicture}
% 	\caption{Approximate probabilistic conformance of PUS under a finite set of relevant inputs $\sysIn$ for a monotone formula set $\Phi$ of temporal specifications.\label{fig:conformance}}
% \end{figure}

\begin{definition}[Conformance] \label{def:conformance}
Let $\Phi$ be an infinite set of STL formulas. 
For two PUSs $\sys_1$ and $\sys_2$, % from \cref{def:PUS} 
and a given $c > 0$,
we say that $\sys_1$ and $\sys_2$ $c$-approximately probabilistically conform for $\Phi$
(for the same given input),
if for any STL formula $\phi \in \Phi$, it holds that
\[
	\big| \pr_{\sig_1 \sim \sys_1} (\sig_1 \models \phi) - 
	\pr_{\sig_2 \sim \sys_2}(\sig_2 \models \phi) \big| < c,
\]
where $\sig_i \sim \sys_i$ is a random path from the PUS $\sys_i$, for $i \in \{1, 2\}$.  
\end{definition}

In \cref{def:conformance}, we only require the satisfaction probabilities to be \emph{approximately} equal for the STL formulas of interest instead of \emph{exactly} equal; the latter is usually unnecessary in applications (see e.g. the case studies presented in \cref{sec:eval}).
Besides, the conformance from \cref{def:conformance} cannot be expressed by single formulas in any common temporal logic since a parameterized formula effectively captures an infinite number of STL formulas. 
For any fixed values of the employed parameters, the property can be expressed in HyperPSTL~\cite{wang_StatisticalVerificationHyperproperties_2019}.

Depending on the choice of the class (i.e., set) of temporal properties $\Phi$, different 
notions for the conformance of PUS are derived, including probabilistic reach-set 
conformance and probabilistic trace conformance.
Commonly, an STL formula set $\Phi$ 
can be derived 
by parametrizing a single STL formula $\phi$ by~\cite{asarin2011parametric} %\todo{fix the reference} 
\begin{equation} \label{eq:Phi}
	\Phi = \{\phi_{\vec{d}}: \vec{d} \in \real^K\}.
\end{equation}
Effectively, $\phi_{\vec{d}}$ represents infinitely many STL formulas, as the parameter $\vec{d}$ can take infinitely many values.

For example, the STL formula set 
\begin{equation} \label{eq:stlset1}
	\Phi_1 = \{ \F_{[0, 1]} (\sig > a): a \in \real \}
\end{equation}
is derived by parametrizing the threshold $a$.
It contains an infinite set of reachability specifications
for the parametrized threshold~$a$ within the fixed time-interval $[0,1]$.
The conformance of the two PUSs $\sys_1$ and $\sys_2$ 
for the set $\Phi_1$ means that, for any threshold $a$  
the probability of reaching 
the threshold should be approximately equal for two random signals respectively from $\sys_1$ and $\sys_2$.

Similarly, the STL formula~set 
\begin{equation} \label{eq:stlset2}
	\Phi_2 = \{ \F_{[0, t]} (\sig > 0): t \in \real \}
\end{equation}
is derived by parametrizing the time horizon $t$.
It contains an infinite set of reachability specifications
for the fixed threshold $0$, within a parameterized time interval $[0,t]$.
The conformance of the two PUSs $\sys_1$ and $\sys_2$ 
for the set $\Phi_2$ means that
the probability of reaching the threshold 0 (i.e., $> 0$)
within any time interval $[0,t]$
should be approximately equal for two random signals respectively from $\sys_1$ and $\sys_2$.

Considering that the PUSs can have complex dynamics 
that may be even unknown in practice,
in this work we propose to statistically verify the conformance of PUSs from \cref{def:conformance}; such method exhibits better scalability than the exhaustive 
approaches and can handle unknown dynamics
\cite{agha_SurveyStatisticalModel_2018,
larsen_StatisticalModelChecking_2016}.
There are infinitely many STL formulas 
of interest in \eqref{eq:Phi}, so the proposed statistical verification method
should be able to handle an infinite set of STL specifications.
This is very challenging since all existing statistical verification
techniques can only handle single STL specifications
\cite{legay_StatisticalModelChecking_2010,agha_SurveyStatisticalModel_2018}.
To solve this, we focus on the conformance for \emph{monotonically} parameterized STL formulas, {which are commonly used in system~analysis}~\cite{asarin2011parametric}.

% \paragraph*{Monotonically parameterized STL formulas}
% \yw{Monotonically parameterized STL formulas are an important class of parameterized STL formulas~\cite{asarin2011parametric}}. 
%\todo{I removed the "paragraph*".}
%
Generally, the parametrized formula $\phi_{\vec{d}}$ 
(where $\vec{d}$ captures the vector of parameters)
is monotone if the satisfaction probability on a model 
is preserved for the order of the parameters -- 
i.e., the satisfaction probability changes monotonically with the parameter. 
While statistically verifying the probabilistic conformance for
an arbitrary STL formula set is very difficult,
handling a monotonically parameterized formula set
can be done by exploiting the formula's~monotonicity.

\begin{definition}[Monotonically Parameterized Formula] \label{def:mfs}
A parameterized formula $\phi_{\vec{d}}$
with $\vec{d} \in \real^K$
is {\em monotone} for a PUS $\sys$ if for any given path $\sig$ from $\sys$ and $i \in [K]$, and
\begin{itemize}
	\item 
	for any $\vec{d}, \vec{d'}$ such that $\vec{d} \preceq_i \vec{d'}$,
	it holds that
	$\sig \models \phi_{\vec{d}}$ implies $\sig \models \phi_{\vec{d'}}$, OR 

	\item 
	for any $\vec{d}, \vec{d'}$ such that $\vec{d} \preceq_i \vec{d'}$,
	it holds that $\sig \models \phi_{\vec{d'}}$ implies $\sig \models \phi_{\vec{d}}$;	
\end{itemize}	
{here, $\vec{d} \preceq_i \vec{d'}$ denotes that
the entries of $\vec{d}$ and $\vec{d'}$ are equal 
except for $\vec{d}_i \leq \vec{d'}_i$.}
\end{definition}
	
Following \cref{def:mfs}, the parameter alternation preserves the parametrized STL formula's monotonicity.

\begin{definition}[Alternation] \label{def:alternation}
The function $\pi(\vec{d}) = \vec{d}'$ is called an alternation, if for all $i \in [K]$, $d'_i = d_i$ or $d'_i = - d_i$. The set of all $K$-dimensional alternations in $\real^K$ is denoted by $\Pi_K$.
\end{definition}

From the previous definitions, the following directly holds. 

\begin{lemma}
If $\phi_{\vec{d}}$ is a monotonically parameterized STL formula,
then so is $\phi_{\pi({\vec{d})}}$, where $\pi$ is an alternation.%
\footnote{The alternation of time horizon parameters in $\U$ (and other temporal operators) follows the aforementioned convention that $\varphi_1  \U_{[t_1, t_2]} \varphi_2$ is equivalent to $\False$, if $t_2 < t_1$, $t_1 < 0$, or $t_2 < 0$. }
\end{lemma}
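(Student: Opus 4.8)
The plan is to exploit the fact that an alternation $\pi \in \Pi_K$ acts on each coordinate independently, either as the identity or as a sign flip, so that monotonicity --- which \cref{def:mfs} checks one coordinate $i$ at a time --- can be verified coordinate by coordinate. Writing $\psi_{\vec{d}} := \phi_{\pi(\vec{d})}$, I would fix an arbitrary path $\sig$ from a PUS $\sys$ and an arbitrary index $i \in [K]$, and show that $\psi$ satisfies one of the two implications in \cref{def:mfs} at coordinate $i$. This reduces the whole statement to a two-case analysis depending on whether $\pi$ flips the sign of coordinate $i$.

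The key observation is how the order $\preceq_i$ transforms under $\pi$. Take any $\vec{d} \preceq_i \vec{d'}$, so that $\vec{d}$ and $\vec{d'}$ agree off coordinate $i$ and $\vec{d}_i \leq \vec{d'}_i$. Since $\pi$ is applied coordinatewise with the same flip pattern to both arguments, $\pi(\vec{d})$ and $\pi(\vec{d'})$ again agree off coordinate $i$. If $\pi$ leaves coordinate $i$ fixed, then $\pi(\vec{d})_i = \vec{d}_i \leq \vec{d'}_i = \pi(\vec{d'})_i$, so $\pi(\vec{d}) \preceq_i \pi(\vec{d'})$ and the order is preserved. If $\pi$ flips coordinate $i$, then $\pi(\vec{d})_i = -\vec{d}_i \geq -\vec{d'}_i = \pi(\vec{d'})_i$, so $\pi(\vec{d'}) \preceq_i \pi(\vec{d})$ and the order is reversed.

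With this, the argument closes by transporting the monotonicity of $\phi$ at coordinate $i$ through $\pi$. By hypothesis $\phi$ satisfies one of the two implications of \cref{def:mfs} at coordinate $i$ for the path $\sig$. When $\pi$ fixes coordinate $i$ the order is preserved, so the same implication holds verbatim for $\psi$. When $\pi$ flips coordinate $i$ the order is reversed, so applying $\phi$'s implication to the reversed pair turns an increasing implication into a decreasing one and vice versa; crucially, because \cref{def:mfs} accepts \emph{either} direction, the reversed implication is still one of the two admissible forms, so $\psi$ is again monotone at $i$. Since $\sig$ and $i$ were arbitrary, $\psi = \phi_{\pi(\vec{d})}$ is monotone.

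I expect the only real subtlety --- and the point of the footnote --- to be the edge cases created when a sign flip sends a temporal bound negative. These are absorbed by the standing convention that $\varphi_1 \U_{[t_1, t_2]} \varphi_2$ collapses to $\False$ whenever $t_2 < t_1$, $t_1 < 0$, or $t_2 < 0$: since this convention is already part of the semantics over the full parameter domain $\real^K$, the hypothesized monotonicity of $\phi$ holds for negative bounds as well, and the coordinatewise argument above applies without modification. The main thing to keep airtight is therefore the direction bookkeeping in the flip case, ensuring that the symmetric ``OR'' of \cref{def:mfs} genuinely absorbs the reversal.
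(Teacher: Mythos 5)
Your proof is correct: the coordinatewise case analysis (flip versus no flip), the observation that a flip reverses $\preceq_i$ while a non-flip preserves it, and the use of the symmetric ``OR'' in \cref{def:mfs} to absorb the reversal are exactly the routine verification this lemma requires. The paper itself offers no explicit proof---it states that the lemma ``directly holds'' from the definitions---so your write-up simply fills in the details of that same direct argument, including the correct handling of the negative-bound convention from the footnote.
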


% The monotonicity of a parameterized STL formula may depend on the model $\sys$.
% %
% For example, the parameterized STL formula from~\eqref{eq:stlset2}
% is monotone for $t \in \real$ for any PUS, since from STL semantics,
% $\F_{[0, t_1]} (\sig > 0)$ 
% always implies
% $\F_{[0, t_2]} (\sig > 0)$  
% for any $t_2 \geq t_1$ (including negative $t_1, t_2$).
% %
% On the other hand, the parameterized STL formula from~\eqref{eq:stlset1} is monotone for $a \in \real$, if any signal $\sig(t)$ from the PUS $\sys$ is monotone -- i.e., for any $t_2 \geq t_1$, it holds that $\sig(t_2) \geq \sig(t_1)$ or $\sig(t_2) \leq \sig(t_1)$.
% %
% As illustrated in \cref{fig:monotone signal},
% if signal $x(t)$ is non-decreasing
% then for any $a_1 \geq a_2$,
% it~holds
% \begin{equation} \label{eq:ex2}
% 	\big(\F_{[0,1]} \G (\sig > a_1)\big) \Rightarrow \big(\F_{[0,1]} \G (\sig > a_2)\big).
% \end{equation}
% %
% This is generally not true for an arbitrary signal $\sig(t)$.

\section{Statistical Test for Approximate Equality of Distributions} %Probability~
\label{sec:EKS}

Before introducing a statistical verification algorithm for probabilistic conformance, we propose a new statistical test for the equivalence of two (unknown) probability distributions, based on the classic Kolmogorov-Smirnov~test~\cite{deshpande_NonparametricStatisticsTheory_2018,peacock_TwodimensionalGoodnessoffitTesting_1983}.
We start from the scalar case and then extend to the multidimensional case.
%\todo{rephrased the commented paragraph below.}

% We now introduce a statistical verification algorithm for verifying the probabilistic conformance of two PUSs for a monotonically parametrized STL formula (i.e., for the corresponding STL formula set). Due to the monotonicity, the satisfaction probabilities on a PUS of the parametrized formula can be captured by an (unknown) cumulative distribution function (CDF). To check the probabilistic conformance for this monotonically parametrized formula, it suffices to check the equivalence of two (unknown) CDFs. To achieve this, we propose a new statistical test, based on the classic Kolmogorov-Smirnov~test~\cite{deshpande_NonparametricStatisticsTheory_2018}.

% \begin{figure}[!t]
% \centering
% \begin{tikzpicture}[scale=1.50]
% \draw [->, thick] (0, 0) -- (0, 1.5) node[below left] () {$x(t)$};
% \draw [->, thick] (0, 0) -- (3, 0) node[below] () {$t$};
% \draw [] (0, 0) to[out=45,in=-125] (1.2,0.7) to[out=45,in=-160] (2.5, 1.2) node[above] () {};
% \draw [dashed] (0, 0.6) -- (3, 0.6) node[above] () {$a_2$};
% \draw [dashed] (0, 1) -- (3, 1) node[above] () {$a_1$};
% \end{tikzpicture}
% \caption{Example Monotone Path.
% \label{fig:monotone signal}}
% \end{figure}

% \subsection{Statistical Test for Conformance} 

Consider two $K$-dimensional random vectors 
$\vec{X} = (X_1, \allowbreak \ldots, X_K)$ 
and $ \vec{Y} = (Y_1, \ldots, Y_K)$.
For each $K$-dimensional alternation $\pi \in \Pi_K$, we~define 
\begin{equation} \label{eq:CDF}
\begin{split}
    & F^{\pi} (\vec{a}) = \pr_{ \vec{X}} \big(\pi(\vec X)_1 \leq \pi(\vec{a})_1, \ldots, \pi(\vec X)_K \leq \pi(\vec{a})_K\big),
 \\ & G^{\pi} (\vec{a}) = \pr_{ \vec{Y}} \big(\pi(\vec Y)_1 \leq \pi(\vec{a})_1, \ldots, \pi(\vec Y)_K \leq \pi(\vec{a})_K\big),
\end{split}
\end{equation}
where $\pi(\vec X)_i$ is the $i^{th}$ entry of $\pi(\vec X)$,
and the probabilities $\pr_{\vec{X}}$
and $\pr_{ \vec{Y}}$
are taken for the random vectors 
$\vec{X}$ and $\vec{Y}$, respectively.
If $\pi$ is the identity map, 
then $F^{\pi}$ and $G^{\pi}$ are respectively
the cumulative distribution functions (CDFs)
of $\vec{X}$ and $\vec{Y}$,
which we denote by $F$ and $G$ to simplify our notation.
Otherwise, $F^{\pi}$ and $G^{\pi}$ are 
the complimentary CDFs of $\vec{X}$ and $\vec{Y}$.

To measure the \emph{difference} between the probability distributions of $\vec{X}$ and $\vec{Y}$, let 
\begin{equation} \label{eq:alpha_XY}
    \gamma_{\vec{X},\vec{Y}} = \max_{\pi \in \Pi_K} \nm{F^{\pi} - G^{\pi}}_\infty,
\end{equation}
with $\nm{\cdot}_\infty$ standing for the $L_\infty$ function norm. 
If $\gamma_{\vec{X},\vec{Y}} = 0$, then 
$\vec{X}$ and $\vec{Y}$ have the same probability distributions.
 
The approximate equality of the probability distributions 
of $\vec{X}$ and $\vec{Y}$ is formulated as 
the hypothesis testing problem %\todo{no $=$?}
\begin{equation} \label{eq:HT}
    \mathcal{H}_0: \gamma_{\vec{X},\vec{Y}} < c \quad \mathcal{H}_1: \gamma_{\vec{X},\vec{Y}} > c, 
\end{equation}
where $c > 0$ is the given parameter for approximate equality. 
The alternation~$\pi$ in~\eqref{eq:alpha_XY} is necessary since two different multidimensional probability distributions may have the same CDFs but different complimentary CDFs.

\begin{assumption}
Similar to previous work on statistical verification~\cite{wang_StatisticalVerificationHyperproperties_2019,zarei2020statistical}, we assume $\gamma_{\vec{X},\vec{Y}} \neq c$, which ensures that as the number of samples increases, the samples will increasingly concentrate to support either $\mathcal{H}_0$ or $\mathcal{H}_1$ by the central limit theorem. Therefore, a statistical analysis based on the majority of the samples has increasing accuracy. This assumption is weaker than the ``indifference region'' adopted in other works on statistical verification~\cite{legay_StatisticalModelChecking_2010,agha_SurveyStatisticalModel_2018}.
\end{assumption}

\begin{remark}
The hypothesis testing problem~\eqref{eq:HT}
cannot be handled by 
the classic Kolmogorov-Smirnov (KS) test~\cite{deshpande_NonparametricStatisticsTheory_2018} 
and its multivariate generalization~\cite{peacock_TwodimensionalGoodnessoffitTesting_1983},
since they can only check for \emph{the exact equality} of 
two probability distributions, i.e., the hypothesis testing problem
% \todo{why did you prune the previous discussion?}
\begin{equation} \label{eq:HT_classic}
    \mathcal{H}'_0: \gamma_{\vec{X},\vec{Y}} = 0 \quad \mathcal{H}'_1: \gamma_{\vec{X},\vec{Y}} > 0. 
\end{equation}
\end{remark}

To solve problem \eqref{eq:HT}, we build on the KS test and introduce a new statistical test for any given confidence level $\alpha$ (i.e., the lowest probability that the test's assertion agrees with the truth in all cases).%
\footnote{The confidence level $\alpha$ is minimum of the p-values of the two hypothesis. Accordingly, we refer to $1 - \alpha$ as the significance level, which is the maximal of the false positive and the false negative rates.}
To facilitate presentation, we start from the scalar case and then move to the vector case. 

% two cases, where $\vec{X}$ and $\vec{Y}$ are scalar or vector, starting from the former.

\subsection{Scalar Random Variables}

If $X$ and $Y$ are scalar,%
\footnote{For this scalar case, to simplify our notation, we denote $\vec X$ and $\vec Y$ as $X$ and $Y$.}
then from~\eqref{eq:HT},
we have that $\gamma_{X,Y} = \nm{F - G}_\infty$,
where $F$ and $G$ are the CDFs of $X$ and $Y$, respectively.% 
\footnote{This does not hold in general for multidimensional random variables.}
Given two sets of independent and identically distributed (i.i.d.) samples 
$$X^{[n]} = ( {X}^{(1)}, \ldots, {X}^{(n)} ), \quad Y^{[m]} = ( {Y}^{(1)}, \ldots, {Y}^{(m)} ),$$ 
drawn respectively from $X$ and $Y$, 
%the empirical cumulative distribution functions (ECDFs) 
the ECDFs of the samples are
\begin{equation} \label{eq:ecdf}
\begin{split}
    & F_{X^{[n]}} (x) = \frac{1}{n} \sum\nolimits_{i = 1}^n \id (X^{(i)} \leq x), 
    \\ & G_{Y^{[m]}} (y) = \frac{1}{m} \sum\nolimits_{i = 1}^m \id (Y^{(i)} \leq y),
\end{split}
\end{equation}
where $\id (\cdot)$ is the indicator function.
Intuitively, the different $\gamma_{X,Y}$ can be statistically estimated by (as illustrated in \cref{fig:ks test})
\begin{equation} \label{eq:deltanm}
    \delta_{X^{[n]}, Y^{[m]}} = \nm{F_{X^{[n]}}  - G_{Y^{[m]}}}_\infty. 
\end{equation}
When the numbers of samples $n, m \to \infty$,
the ECDFs converges to 
the CDFs: $F_{X^{[n]}}  \to F$ and $G_{Y^{[m]}}  \to G$,%
\footnote{More precisely, this is convergence in distribution.}
and thus,
$\delta_{X^{[n]}, Y^{[m]}} \to \gamma_{X,Y}$ by Glivenko-Cantelli theorem~\cite{van1996glivenko}.
Therefore, for the hypothesis testing problem \eqref{eq:HT},
we propose the statistics assertion
\begin{equation} \label{eq:assert}
    \assert(X^{[n]}, Y^{[m]}) = \begin{cases}
        \mathcal{H}_0, & \text{if } \delta_{X^{[n]}, Y^{[m]}} < c, \\
        \mathcal{H}_1, & \text{if } \delta_{X^{[n]}, Y^{[m]}} > c.
    \end{cases}
\end{equation}

For random samples $X^{[n]}$ and $Y^{[m]}$, 
the probability $\alpha$ that the assertion \eqref{eq:assert} agrees with 
the correct answer to the hypothesis testing problem \eqref{eq:HT}
is called the \emph{confidence level}.
It depends on the \emph{discrepancy} between $\gamma_{X,Y}$ and $\delta_{X^{[n]}, Y^{[m]}}$,
which is bounded by 
\begin{equation} \label{eq:delta} 
    d_{X^{[n]}, Y^{[m]}} = \nm{(F_{X^{[n]}}  - F) - (G_{Y^{[m]}}  - G)}_\infty
\end{equation}
due to the triangle inequality
\begin{equation} \label{eq:triangle}
    \big| \delta_{X^{[n]}, Y^{[m]}} - \gamma_{X,Y} \big| \leq d_{X^{[n]}, Y^{[m]}}.
\end{equation}
When the numbers of samples $n, m \to \infty$, the discrepancy $d_{X^{[n]}, Y^{[m]}} \allowbreak \to 0$ with probability $1$.
However, the probability distribution 
of the rescaled discrepancy 
$d_{X^{[n]}, Y^{[m]}} \allowbreak \sqrt{mn/(m+n)}$
(for random samples $X^{[n]}, Y^{[m]}$)
is asymptotically invariant of $n, m$ 
and is independent of the CDFs $F$ and $G$, 
as formally stated below.

\begin{lemma}[Section {7.9} of 
\cite{deshpande_NonparametricStatisticsTheory_2018}] \label{lem:independence}
The CDF $H (x)$ of the $d_{X^{[n]}, Y^{[m]}} \allowbreak \sqrt{mn/(m+n)}$ from~\eqref{eq:delta}
obeys the Kolmogorov-Smirnov distribution
\begin{equation} \label{eq:delta_cdf}
    H (x) = 
    1 - 2 \sum_{i=1}^\infty (-1)^{i-1} e^{-2 i^2 x^2}
    \approx 1 - 2 e^{-2 x^2}.
\end{equation}
\end{lemma}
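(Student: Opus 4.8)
The plan is to treat $d_{X^{[n]},Y^{[m]}}\sqrt{mn/(m+n)}$ as a continuous functional of two \emph{independent} empirical processes and pass to a Gaussian limit. First I would apply the probability integral transform: assuming $F$ and $G$ are continuous, the variables $U^{(i)} = F(X^{(i)})$ and $V^{(i)} = G(Y^{(i)})$ are i.i.d.\ uniform on $[0,1]$, so that $\sqrt{n}\,(F_{X^{[n]}}(x) - F(x)) = \alpha_n(F(x))$ and $\sqrt{m}\,(G_{Y^{[m]}}(x) - G(x)) = \beta_m(G(x))$, where $\alpha_n,\beta_m$ are the (mutually independent) uniform empirical processes. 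Writing $\lambda = \lim n/(n+m)$, the rescaled discrepancy from~\eqref{eq:delta} becomes
\begin{equation*}
  \sqrt{\tfrac{mn}{m+n}}\,\big[(F_{X^{[n]}}(x) - F(x)) - (G_{Y^{[m]}}(x) - G(x))\big] = \sqrt{\tfrac{m}{m+n}}\,\alpha_n(F(x)) - \sqrt{\tfrac{n}{m+n}}\,\beta_m(G(x)),
\end{equation*}
so the statistic of interest is the supremum over $x$ of the absolute value of this process.

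Next I would invoke Donsker's functional central limit theorem, strengthening the Glivenko--Cantelli convergence already used above: $\alpha_n \Rightarrow B_1$ and $\beta_m \Rightarrow B_2$ in the uniform norm, where $B_1,B_2$ are independent standard Brownian bridges on $[0,1]$, while $m/(m+n)\to 1-\lambda$ and $n/(m+n)\to\lambda$. In the relevant (classical two-sample null) regime the two CDFs coincide, $F = G$ continuous, so the limiting process is $\sqrt{1-\lambda}\,B_1(F(x)) - \sqrt{\lambda}\,B_2(F(x))$. Since a linear combination of independent Gaussian bridges whose squared coefficients sum to $(1-\lambda)+\lambda = 1$ is again a standard Brownian bridge $B$, this equals $B(F(x))$; and because $F$ is continuous, $F(x)$ sweeps all of $[0,1]$. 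Applying the continuous-mapping theorem to the (Lipschitz) sup functional then gives
\begin{equation*}
  d_{X^{[n]},Y^{[m]}}\sqrt{mn/(m+n)} \;\xrightarrow{d}\; \sup_{t\in[0,1]}\abs{B(t)}.
\end{equation*}

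It then remains to identify the law of $\sup_{t\in[0,1]}\abs{B(t)}$ as the Kolmogorov--Smirnov CDF $H(x)$ in~\eqref{eq:delta_cdf}. I would obtain this by the reflection principle (method of images): representing $B(t) = W(t) - t\,W(1)$ for a standard Brownian motion $W$ and successively reflecting the barriers $\pm x$ yields an inclusion--exclusion over image barriers at heights $2ix$, producing the alternating series $1 - 2\sum_{i\ge 1}(-1)^{i-1}e^{-2i^2x^2}$; retaining only the $i=1$ term gives the stated approximation $1 - 2e^{-2x^2}$.

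The main obstacle is the \emph{distribution-free} character of the limit. The collapse to a single time-changed bridge is clean precisely when $F = G$, since then $(F(x),G(x))$ traces the diagonal and the sup law becomes independent of the underlying CDF. When $F \neq G$ the limiting covariance is $(1-\lambda)\big[F(x\wedge y)-F(x)F(y)\big] + \lambda\big[G(x\wedge y)-G(x)G(y)\big]$, which in general is not that of a reparametrized bridge, so strictly the KS law is the reference distribution under the conformant/null regime that the test~\eqref{eq:HT} concentrates on — exactly the setting in which \cref{lem:independence} is applied. The residual work is routine: checking the joint weak convergence of the two independent empirical processes, the continuity of the sup functional on the uniform space so that the continuous-mapping theorem applies, and the bookkeeping of the reflection series.
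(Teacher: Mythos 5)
Your argument is sound, but it cannot be matched against a proof in the paper because the paper contains none: \cref{lem:independence} is imported as a known result from Section 7.9 of \cite{deshpande_NonparametricStatisticsTheory_2018}, i.e., the classical two-sample Smirnov theorem. What you wrote is the standard proof of that cited theorem --- probability integral transform, Donsker's theorem for the two independent uniform empirical processes, the observation that $\sqrt{1-\lambda}\,B_1-\sqrt{\lambda}\,B_2$ is again a standard Brownian bridge because the squared coefficients sum to one, the continuous-mapping theorem for the supremum functional, and the reflection-principle series giving \eqref{eq:delta_cdf}. Two assumptions you use silently deserve a mention: the ratio $n/(n+m)$ must converge (true here, since \cref{alg:EKS} draws fixed batches $k_1$, $k_2$ per round), and $F$, $G$ must be continuous for the integral transform, a hypothesis the lemma does not state.

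The caveat in your final paragraph, however, is not ``residual routine work''; it is a genuine gap in the lemma as stated, and you are right to flag it. The quantity $d_{X^{[n]},Y^{[m]}}$ in \eqref{eq:delta} is the \emph{centered} discrepancy, designed precisely so that it is meaningful when $F\neq G$, and the paper invokes the lemma exactly in that regime: the step from \eqref{aux:2} to \eqref{aux:3} is taken under $\mathcal{H}_1$, where $\gamma_{X,Y}>c>0$ forces $F\neq G$. As you observe, in that case the limit of $\sqrt{mn/(m+n)}\,\bigl[(F_{X^{[n]}}-F)-(G_{Y^{[m]}}-G)\bigr]$ is the Gaussian process $\sqrt{1-\lambda}\,B_1(F(\cdot))-\sqrt{\lambda}\,B_2(G(\cdot))$, whose covariance is generally not that of a time-changed bridge, so its supremum is \emph{not} KS-distributed. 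Concretely, if $F$ and $G$ have disjoint supports and $\lambda=1/2$, the limiting law is that of $\max(K_1,K_2)/\sqrt{2}$ for independent KS-distributed $K_1,K_2$, which differs from the KS law. Hence the claim that the distribution is ``independent of the CDFs $F$ and $G$'' holds only under $F=G$, which is all that your proof (and the textbook theorem) establishes. What the paper's derivation of \eqref{eq:alpha} actually needs for $F\neq G$ is weaker: stochastic domination, $\pr\bigl(d_{X^{[n]},Y^{[m]}}\sqrt{mn/(m+n)}>y\bigr)\le 1-H(y)$ for arbitrary $F,G$, which would keep the confidence bound conservative. That domination is plausible (it holds in the disjoint-support example above), but it is proved neither in the paper, nor in the cited textbook, nor in your argument; supplying it is the real missing step, and it is not routine.
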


\begin{figure}[!t]
\centering
\begin{tikzpicture}[scale=1.50]
\draw [->] (0, 0) -- (0, 1.8);
\draw [->] (-2, 0) -- (2, 0) {};
\draw [dashed] (-2, 1.5) -- (0.1, 1.5) node[above] () {1} -- (2, 1.5);
\draw [thick] (-2, 0) to (-1.5, 0) to (-1.5, 0.2) to (-0.6, 0.2) to (-0.6, 0.5) to (0, 0.5) to (0, 1) to node[below] () {$F_{X^{[n]}} (x)$} (1, 1) to (1, 1.5) to (2, 1.5);
\draw [dash dot, thick] (-2, 0) to (-1.5, 0) to (-1.5, 0.3) to (-1.1, 0.3) to (-1.1, 0.8) to node[above, xshift=-0.2cm] () {$G_{Y^{[m]}} (x)$} (-0.5, 0.8) to (-0.5, 1.2) to (0, 1.2) to (1.5, 1.2) to (1.5, 1.5) to (2, 1.5);
\draw [<->, dotted] (-0.3, 0.5) -- node[yshift=-0.3cm, xshift=-0.2cm] () {\small $\delta_{X^{[n]}, Y^{[m]}}$} (-0.3, 1.2);
\end{tikzpicture}
\caption{Illustration of the statistics $\delta_{X^{[n]}, Y^{[m]}}$.\label{fig:ks test}}
\end{figure}
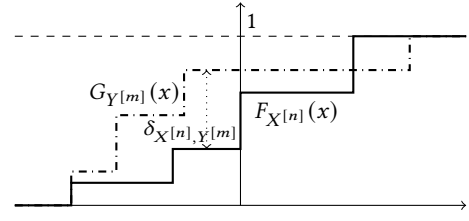

Now, we derive the significance level (i.e., $1 -\alpha$, where $\alpha$ is the confidence level ) of the assertion~\eqref{eq:assert} when observing a value of the test statistics $\delta_{X^{[n]}, Y^{[m]}}$ to be $\lambda$. If $\lambda < c$, the significance level of~\eqref{eq:assert} is the probability of observing a value of $\delta_{X^{[n]}, Y^{[m]}}$ (in any other test) that is lower than $\lambda$ (i.e., more in favor of the hypothesis $\mathcal{H}_0$) under the likely-false hypothesis $\mathcal{H}_1$. It holds that%
\footnote{{To simplify our notation, we also use $\delta_{X^{[n]}, Y^{[m]}}$ to denote a random value of the test statistics in any other test in computing the significance level.}}
\begin{align}
1 - \alpha & = \pr_{X^{[n]}, Y^{[m]}} \big( \delta_{X^{[n]}, Y^{[m]}} < \lambda \mid \mathcal{H}_1 \big) \notag
\\ & = \pr_{X^{[n]}, Y^{[m]}} \big( \gamma_{X,Y} - \delta_{X^{[n]}, Y^{[m]}} > \gamma_{X,Y} - \lambda \mid \mathcal{H}_1 \big) \notag
\\ & \leq \pr_{X^{[n]}, Y^{[m]}} \big( d_{X^{[n]}, Y^{[m]}} > \gamma_{X,Y} - \lambda \mid \mathcal{H}_1 \big) \label{aux:1}
\\ & \leq \pr_{X^{[n]}, Y^{[m]}} \big( d_{X^{[n]}, Y^{[m]}} > c - \lambda \big) \label{aux:2}
\\ & = 1 - H \big( (c - \lambda) \sqrt{mn/(m+n)} \big) \label{aux:3}
\end{align}
where~\eqref{aux:1} follows from~\eqref{eq:triangle};
\eqref{aux:2} holds since $\gamma_{X,Y} > c$ under $\mathcal{H}_1$;
and~\eqref{aux:3} follows from~\cref{lem:independence}.
Similarly, if $\lambda > c$, the significance level of~\eqref{eq:assert} satisfies
\begin{equation}
1 - \alpha \leq 1 - H \big( (\lambda - c) \sqrt{mn/(m+n)} \big). \label{aux:4}
\end{equation}
Finally, combining~\eqref{aux:3} and~\eqref{aux:4}, the confidence level of~\eqref{eq:assert} satisfies
\begin{equation} \label{eq:alpha}
    \alpha \geq  H \big( \vert \lambda - c \vert \sqrt{mn/(m+n)} \big).
\end{equation}

Based on~\eqref{eq:alpha}, for any desired confidence level $\alpha_d < 1$, our statistical test is deployed \emph{sequentially}. It can return an assertion with an actual confidence level of at least $\alpha_d$. Iteratively, the algorithm draws $k_1$ and $k_2$ new samples from the two CDFs $F$ and $G$, respectively, and then computes the actual confidence level $\alpha$ from~\eqref{eq:alpha}. It terminates when $\alpha > \alpha_d$, and then returns the assertion by \eqref{eq:assert}. This is formally captured in \cref{alg:EKS}.

\begin{algorithm}[!t]
    \caption{Proposed statistical test.\label{alg:EKS}}
    \begin{algorithmic}[1]
    \Require Desired confidence $\alpha_d > 0$, $c \in (0,1)$, $k_1, k_2 \in \nat$.
    
    \State Sample sizes $n, m \gets 0$, $\alpha \gets 0$.
    
    \While{$\alpha < \alpha_d$}
        
    \State Draw $k_1$, $k_2$ new samples from $X$, $Y$, respectively.

    \State $n \gets n + k_1$, $m \gets m + k_2$
    
    \State Update $\delta_{X^{[n]}, Y^{[m]}}$ by \eqref{eq:deltanm}.
    
    \State Update $\alpha$ by \eqref{eq:alpha}.
    
    \EndWhile
    
    \State \Return $\assert$ by \eqref{eq:assert}.
    
    \end{algorithmic}
\end{algorithm}

\begin{theorem} \label{thm:1}
\cref{alg:EKS} terminates with probability $1$ and has the confidence
level $\alpha_d$.
\end{theorem}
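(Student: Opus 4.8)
The plan is to establish the two assertions of \cref{thm:1}---almost-sure termination and the confidence guarantee---in turn, since the quantities that control the confidence bound are exactly those whose limiting behavior drives termination.

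For termination, I would first record that after $j$ passes through the \textbf{while} loop the sample sizes are $n = jk_1$ and $m = jk_2$, so the scaling factor $\sqrt{mn/(m+n)} = \sqrt{j\,k_1 k_2/(k_1+k_2)}$ diverges to $+\infty$ as $j \to \infty$. The probabilistic input is the Glivenko--Cantelli theorem, which gives $\nm{F_{X^{[n]}} - F}_\infty \to 0$ and $\nm{G_{Y^{[m]}} - G}_\infty \to 0$ almost surely; by the triangle inequality these force $\delta_{X^{[n]}, Y^{[m]}} \to \gamma_{X,Y}$ almost surely. Invoking our assumption that $\gamma_{X,Y} \neq c$ yields $|\gamma_{X,Y} - c| > 0$, so almost surely the observed value $\lambda = \delta_{X^{[n]}, Y^{[m]}}$ eventually satisfies $|\lambda - c| \geq |\gamma_{X,Y} - c|/2 > 0$ by the reverse triangle inequality. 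Multiplying this fixed positive lower bound by the diverging factor shows $|\lambda - c|\sqrt{mn/(m+n)} \to \infty$ almost surely, and since $H$ is the Kolmogorov--Smirnov CDF of \cref{lem:independence} we have $H(x) \to 1$ as $x \to \infty$. Hence the updated value $\alpha = H(|\lambda - c|\sqrt{mn/(m+n)})$ tends to $1$ almost surely; because $\alpha_d < 1$, the loop guard $\alpha < \alpha_d$ fails after finitely many iterations, which is termination with probability one.

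For the confidence guarantee I would argue straight from \eqref{eq:alpha}. At termination the algorithm has set $\alpha = H(|\lambda - c|\sqrt{mn/(m+n)})$ with $\alpha \geq \alpha_d$, and the chain \eqref{aux:1}--\eqref{aux:3} (resp.~\eqref{aux:4}) culminating in \eqref{eq:alpha} shows that this computed value lower-bounds the true probability that the assertion $\assert$ of \eqref{eq:assert} agrees with the correct answer to \eqref{eq:HT}. Consequently the returned assertion carries confidence at least $\alpha \geq \alpha_d$, as claimed.

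The delicate point---and the one I expect to require the most care---is the interplay between the fixed-sample derivation of \eqref{eq:alpha} and the sequential, data-dependent stopping rule. The significance computation leading to \eqref{eq:alpha} treats $(n,m)$ as predetermined, whereas \cref{alg:EKS} stops at a random time determined by the very samples being tested, which in general can inflate the error rate (the classic multiple-looks problem). I would therefore need to argue that the reported confidence is the confidence of the terminal fixed-sample test conditioned on the realized final sample sizes, so that \eqref{eq:alpha} applies verbatim to the decision actually returned; pinning this down cleanly is the crux, whereas the termination argument is routine once the divergence of $\sqrt{mn/(m+n)}$ and the almost-sure convergence $\delta_{X^{[n]}, Y^{[m]}} \to \gamma_{X,Y}$ are in hand.
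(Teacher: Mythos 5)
Your termination argument is correct and is essentially the paper's, just spelled out in more detail: the paper's proof simply notes that $\delta_{X^{[n]}, Y^{[m]}} \to \gamma_{X,Y} \neq c$ with probability $1$ and concludes that the loop guard must eventually fail; your additions (Glivenko--Cantelli, the divergence of $\sqrt{mn/(m+n)}$, the eventual lower bound $|\lambda - c| \geq |\gamma_{X,Y} - c|/2$) are exactly the steps the paper leaves implicit, and Glivenko--Cantelli is indeed the ingredient the paper cites earlier in \cref{sec:EKS}.

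The gap is in the confidence half, and you have named it yourself: you never resolve the ``delicate point'' you flag, so your argument that the returned assertion has confidence at least $\alpha_d$ is incomplete as written. The paper closes this step with a concrete device that your proposal is missing: a decomposition over the stopping time. Let $\tau$ be the iteration at which \cref{alg:EKS} terminates and let $A$ be the event that the assertion $\assert$ of \eqref{eq:assert} is correct for \eqref{eq:HT}. By the termination half, $\sum_{i \in \nat} \pr(\tau = i) = 1$, so
\begin{equation*}
\pr(A) \;=\; \sum_{i \in \nat} \pr(A \mid \tau = i)\,\pr(\tau = i) \;\geq\; \alpha_d \sum_{i \in \nat} \pr(\tau = i) \;=\; \alpha_d,
\end{equation*}
where the inequality applies \eqref{eq:alpha} on each event $\{\tau = i\}$ (on that event the algorithm stopped precisely because the computed $\alpha$ reached $\alpha_d$). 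Stating this decomposition is what you need to match the paper's proof. That said, your worry is legitimate and worth recording: the step $\pr(A \mid \tau = i) \geq \alpha_d$ invokes the fixed-sample bound \eqref{eq:alpha} conditionally on the data-dependent event $\{\tau = i\}$, and the paper gives no argument that this conditioning leaves the bound intact---this is exactly the multiple-looks phenomenon you describe. In other words, the paper resolves the crux you identified by assertion rather than by proof; your proposal's shortcoming relative to the paper is that it supplies neither the total-probability bookkeeping nor a resolution of that underlying issue.
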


\begin{proof}
{\bf Termination}:
As $n, m \to \infty$, 
we have $\delta_{X^{[n]}, Y^{[m]}} \to \gamma_{X,Y} \neq c$,
{ -- i.e., $\delta_{X^{[n]}, Y^{[m]}}$ converges to some value that is not $c$ with 
probability $1$},
so either $\mathcal{H}_0$ or $\mathcal{H}_1$ holds.
Therefore, \cref{alg:EKS} terminates with probability $1$.

\noindent {\bf Correctness}:
Let $\tau$ be the step \cref{alg:EKS} terminates
and $A$ be ``the assertion $\assert$ from~\eqref{eq:assert} is correct'',
then
$\pr(A) = \sum\nolimits_{i \in \nat} \pr(A   \vert \, \tau = i) \pr(\tau = i)$.
From~\eqref{eq:alpha}, for any $i \in \nat$, 
we have that $\pr(A \allowbreak \, \vert \, \tau = i) > \alpha_d$.
In addition, by {\bf Termination}, we have that $\sum_{i \in \nat} \pr(\tau = i) = 1$, 
Therefore, it holds that $\Pr(A) \geq \alpha_d$.
\end{proof}

\begin{remark}
Although the test statistics $\delta_{X^{[n]}, Y^{[m]}}$ from~\eqref{eq:deltanm} is also used in the standard KS test~\cite{deshpande_NonparametricStatisticsTheory_2018}, the implementation and thresholding on $\delta_{X^{[n]}, Y^{[m]}}$ in~\eqref{eq:assert} in our method fundamentally differs from the KS test. Specifically, our statistical test increasingly draws samples until reaching the desired confidence level (< 1) and the thresholding on $\delta_{X^{[n]}, Y^{[m]}}$ in~\eqref{eq:assert} represents the similarity of the probability distributions as given in~\eqref{eq:HT}. On the other hand, the KS test employs a fixed number of samples and the thresholding on $\delta_{X^{[n]}, Y^{[m]}}$ is related to the confidence level. Consequently, our method guarantees confidence levels for both $\mathcal{H}_0$ and $\mathcal{H}_1$ in~\eqref{eq:HT}, while the KS test only guarantees confidence level for $\mathcal{H}'_0$, and not $\mathcal{H}'_1$, in~\eqref{eq:HT_classic}.
\end{remark}

\begin{algorithm}[!t]
\caption{Statistical verification for conformance.\label{alg:smc}}
\begin{algorithmic}[1]
\Require Desired confidence level $\alpha_d$, threshold $c > 0$ 

\State Sample sizes $n, m \gets 0$, $\alpha \gets 0$ 
\While{$\alpha < \alpha_d$}
    
\State Draw new samples from $\sys_1, \sys_2$ and update $n, m$.

\State Update $F^{\pi}_{n}, G^{\pi}_{m}$ by~\eqref{eq:Fn} and compute $\delta_{n,m}$ by~\eqref{eq:mD delta}.  

\State Update $\alpha$ by~\cite{peacock_TwodimensionalGoodnessoffitTesting_1983,fasano_MultidimensionalVersionKolmogorovSmirnov_1987}.

\EndWhile
\State \Return $\assert$ by~\eqref{eq:assert}.
\end{algorithmic}
\end{algorithm}

\subsection{Multidimensional Random Variables}

Similarly to the scalar case, 
for random vectors $\vec{X}$ and $\vec{Y}$, let {$\vec{X}^{[n]} = (\vec{X}^{(1)}, ..., \vec{X}^{(n)})$ 
and 
$\vec{Y}^{[m]} = (\vec{Y}^{(1)}, ..., \allowbreak \vec{Y}^{(m)})$}
be two sets of i.i.d. samples from $\vec{X}$ and $\vec{Y}$, respectively.
Then, we can define the ECDF and the complimentary ECDFs from $\vec{X}^{[n]}$ by
\begin{equation*}
\begin{split}
    F^\pi_{\vec{X}^{[n]}} (\vec{a}) = & 
    \frac{1}{n} \sum_{i = 1}^n 
    \id \big( 
    \pi(\vec{X}^{(i)})_1 \leq \pi(a)_1, \ldots, \pi(\vec{X}^{(i)})_K \leq \pi(a)_K
    \big),
\end{split}
\end{equation*}
for each  $K$-dimensional alternation $\pi \in \Pi_K$ 
(given by \cref{def:alternation}). Similarly, we can define $G^\pi_{\vec{Y}^{[m]}} (\vec{a})$ from $\vec{Y}^{[m]}$.

Following~\cite{peacock_TwodimensionalGoodnessoffitTesting_1983,fasano_MultidimensionalVersionKolmogorovSmirnov_1987}, we note that generally $\nm{F^\pi_{\vec{X}^{[n]}} - G^\pi_{\vec{Y}^{[m]}}}_{\infty}$ are not equal for all $\pi \in \Pi_K$. Thus, defining the test statistics by only using the CDFs $F^\pi_{\vec{X}^{[n]}}$ and $G^\pi_{\vec{Y}^{[m]}}$  
by $\delta_{X^{[n]}, Y^{[m]}} = \nm{F_{\vec{X}^{[n]}} - G_{\vec{Y}^{[m]}}}_{\infty}$,
as in~\eqref{eq:deltanm} is not enough. Instead, the test statistics should take all the CDFs and complimentary CDFs by
\begin{equation} \label{eq:2D delta}
    \delta_{X^{[n]}, Y^{[m]}} = \max_{\pi \in \Pi_K} \nm{F^\pi_{\vec{X}^{[n]}} - G^\pi_{\vec{Y}^{[m]}}}_{\infty}.
\end{equation}
By~\cite{peacock_TwodimensionalGoodnessoffitTesting_1983,fasano_MultidimensionalVersionKolmogorovSmirnov_1987}, the test statistics $\delta_{X^{[n]}, Y^{[m]}}$ satisfies \cref{lem:independence} and asymptotically obeys the Kolmogorov-Smirnov distribution~\eqref{eq:delta_cdf}. Therefore, the statistical test~\eqref{eq:assert} extends to the multidimensional case by using $\delta_{X^{[n]}, Y^{[m]}}$ from~\eqref{eq:2D delta}. For $K \leq 3$, the confidence level for~\eqref{eq:2D delta} can be derived directly from the results of~\cite{peacock_TwodimensionalGoodnessoffitTesting_1983,fasano_MultidimensionalVersionKolmogorovSmirnov_1987}. For $K>3$, it can be computed by extending the method of~\cite{peacock_TwodimensionalGoodnessoffitTesting_1983,fasano_MultidimensionalVersionKolmogorovSmirnov_1987}.

\section{Statistical Verification of Probabilistic Conformance} 
\label{sec:smc}

Based on the statistical test introduced in \cref{sec:EKS}, we now propose a statistical verification algorithm to check the probabilistic conformance of two PUSs for a monotonically parametrized STL formula (as formulated in \cref{sec:formulation}). For a lucid presentation and as with most other works (e.g.,~\cite{agha_SurveyStatisticalModel_2018,larsen_StatisticalModelChecking_2016}), we focus on bounded-time properties; handling unbounded-time properties is more involving and is an avenue for future work.

Following~\cref{def:conformance}, for a monotonically parametrized STL formula $\phi_{\underline{d}}$ with $\underline{d} \in \real^K$ and for each $K$-dimensional alternation $\pi \in \Pi_K$ (from \cref{def:alternation}), let 
\begin{align} \label{eq:FG}
    \nonumber F^\pi (\underline{d}) = \pr_{\sig_1 \sim \sys_1} (\sig_1 \models \phi_{\pi(\underline{d})} ),
    \\
     G^\pi(\underline{d}) = \pr_{\sig_2 \sim \sys_2} (\sig_2 \models \phi_{\pi(\underline{d})}).
\end{align}
By the monotonicity of $\phi_{\underline{d}}$ from \cref{def:mfs}, for each for $\pi \in \Pi_K$, the multivariate functions $F^\pi$ is the CDF or a complementary CDF of the satisfaction probability of $\phi_{\underline{d}}$ for the parameter $\underline{d}$. 
The equality in~\eqref{eq:FG} is almost everywhere in Lebesgue measure, since distribution functions $F^\pi (\underline{d})$ and $G^\pi (\underline{d})$ need to be right-continuous. The same holds for $G^\pi$.

From \cref{def:conformance}, the PUSs $\sys_1$ and $\sys_2$ conform with respect to the monotonically parametrized formula $\phi_{\underline{d}}$, if the CDFs and complementary CDFs $F^\pi (\underline{d})$ and $G^\pi (\underline{d})$ are approximately equal;~i.e., 
$$\big| \pr_{\sig_1 \sim \sys_1} (\sig_1 \models \phi) - 
\pr_{\sig_2 \sim \sys_2}(\sig_2 \models \phi) \big| < c$$
if and only if
\begin{equation}
    \gamma_{\vec{X},\vec{Y}} = \max_{\pi \in \Pi_K} \nm{F^{\pi} - G^{\pi}}_\infty < c.
\end{equation}
On the other hand, this can be solved by our statistical test introduced in \cref{sec:EKS}.

Specifically, for two sets of sample paths $\sig_1^{[n]} = (\sig_1^{(1)}, ..., \sig_1^{(n)})$ 
and
$\sig_2^{[m]} = (\sig_2^{(1)}, ..., \sig_2^{(m)})$
from the PUSs $\sys_1$ and $\sys_2$, respectively,
we define the empirical approximations of 
$F(\underline{d})$ and $G(\underline{d})$~by
\begin{align} \label{eq:Fn}
    \nonumber F^{\pi}_{n} (\underline{d}) = 
    \frac{1}{n} \sum\nolimits_{i = 1}^n 
    \id ( \sig_1^{(i)} \models \phi_{\pi(\underline{d})} ),
    \\
    G^{\pi}_{m} (\underline{d}) = 
    \frac{1}{m} \sum\nolimits_{i = 1}^m 
    \id ( \sig_2^{(i)} \models \phi_{\pi(\underline{d})} ),
\end{align}
where $\pi \in \Pi_K$ and $\id(\cdot)$ is the indicator function.%
\footnote{{In the rest of this paper, we use simplified notation with subscripts $(~)_{n}$ and $(~)_{m}$ utilized  to indicate the sets of sample paths $\sig_1^{[n]}$ and $\sig_2^{[m]}$.}}
Similarly to \eqref{eq:2D delta}, the test statistics
\begin{equation} \label{eq:mD delta}
    \delta_{n,m} = \max_{\pi \in \Pi_K} \nm{F^{\pi}_{n} - G^{\pi}_{m}}_{\infty},
\end{equation}
where $\delta_{n,m}$ is the $L_\infty$ norm, satisfies \cref{lem:independence} and obeys the KS distribution from~\eqref{eq:delta_cdf} (asymptotically for $K \geq 2$); hence, the statistical test~\eqref{eq:assert} applies.
Since $F^{\pi}_{n}$ and $G^{\pi}_{m}$ are known multidimensional step functions from the samples, $\nm{F^{\pi}_{n} - G^{\pi}_{m}}_{\infty}$ is directly computable.

\cref{alg:smc} for checking probabilistic conformance terminates with probability $1$ 
and can achieve any desired confidence level $\alpha_d < 1$. The proof follows from that of \cref{thm:1}.

% \begin{algorithm}[!t]
%     \caption{Statistical verification for conformance.\label{alg:smc}}
%     \begin{algorithmic}[1]
%     \Require Desired confidence level $\alpha_d$, threshold $c > 0$ 
    
%     \State Sample sizes $n, m \gets 0$, $\alpha \gets 0$ 
%     \While{$\alpha < \alpha_d$}
        
%     \State Draw new samples from $\sys_1, \sys_2$ and update $n, m$.
    
%     \State Update $F^{\pi}_{n}, G^{\pi}_{m}$ by~\eqref{eq:Fn} and compute $\delta_{n,m}$ by~\eqref{eq:mD delta}.  
    
%     \State Update $\alpha$ by~\cite{peacock_TwodimensionalGoodnessoffitTesting_1983,fasano_MultidimensionalVersionKolmogorovSmirnov_1987}.

%     \EndWhile
    
%     \State \Return $\assert$ by~\eqref{eq:assert}.
    
%     \end{algorithmic}
% \end{algorithm}

\section{Case Studies and Evaluation}
\label{sec:eval}

To demonstrate the applicability of our statistical verification algorithms, %We evaluated our statistical verification algorithms 
we evaluated them on three CPS benchmarks with complex dynamics from a wide range of application domains: (1)~Toyota Powertrain, (2)~Lane-Keeping Assistant (LKA) Controllers, and (3)~$100kW$ Grid-Connected Photo Voltaic (PV) Array (due to space constraints the results of the $3^{rd}$ case study are presented in the Appendix). %, to demonstrate the utility of our statistical verification method.
We find the case study in \cref{sub:mpc} particularly important since (LKA controllers), to the best of our knowledge, previously there are few comparative studies between NN-based and conventional techniques in cyber-physical and embedded systems.

The Toyota powertrain model is derived from \cite{jin_PowertrainControlVerification_2014}.
The LKA is implemented in MATLAB using the MPC, Deep Learning, and Reinforcement Learning Toolboxes~\cite{LKA-ref}.
The PV Array is implemented using the Simscape Power Systems toolbox~\cite{simpowerPV}.
All implementations are available at \cite{simulink}.

Evaluations are performed on a laptop with Intel Xeon E-2176M CPU @ $2.7$GHz and 16 GB RAM.
For each case study, we run~\cref{alg:smc} with different indifference parameter $c$ and desired confidence level $\alpha_d$ (i.e., the probability for~\cref{alg:smc} to return the correct assertion is at least $\alpha_d$). We report the test statistics $\delta_{n,m}$,
the number of samples, total algorithm execution time, and the assertion $\assert$ when the algorithm terminates.

\subsection{Toyota Powertrain}
\label{sub:powertrain}

\begin{figure}[!t]
\centering
\includegraphics[width = 0.44\textwidth]{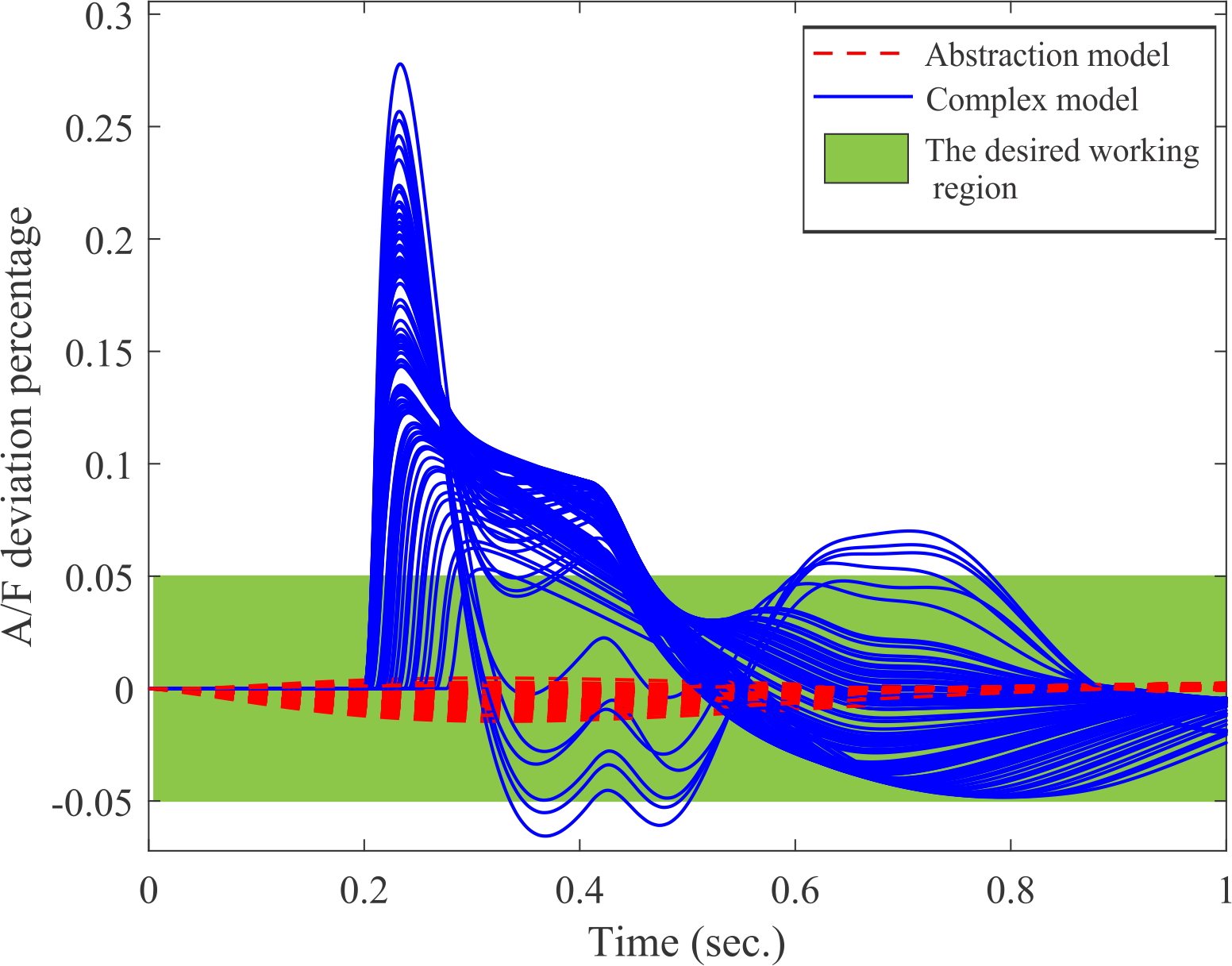}
\caption{{Sample paths from the complex (solid blue) and abstracted (dashed red) models for the A/F ratio deviation percentage. 
The paths remain inside the desired working region (in green) with a significantly higher probability for the abstracted model, illustrating that the distribution of the arrival times to the desired working region is very different for these two models. }}
\label{fig:1}
% \vspace{-10pt}
\end{figure}

We use the Simulink~models for the Toyota Powertrain with a four-mode embedded controller and $15$ state variables from~\cite{jin_PowertrainControlVerification_2014}. It is challenging to show that complex embedded/CPS with hybrid dynamics, such as the powertrain, satisfy strict performance requirements. On the one hand, the available benchmark model must capture a reasonable portion of behaviors of the real powertrain to enable  us to assess, evaluate, and verify the designs against requirements. On the other hand, the simulation time for a simpler model that sufficiently conforms with the real system is significantly~lower.

In~\cite{jin_PowertrainControlVerification_2014}, two models of the Toyota Powertrain are presented. A \emph{detailed} but complex model contains the air-to-fuel (A/F) ratio controller and an average model of the engine dynamics, such as the throttle and intake manifold air dynamics. Due to the complexity of this detailed model and limitations of existing verification tools, in~\cite{jin_PowertrainControlVerification_2014}, a simpler \emph{abstract} model as a hybrid I/O automaton is also introduced to facilitate system analysis, including formal verification.

\paragraph*{Conformance}
For the Toyota powertrain, the A/F ratio control problem is of key interest. Hence, we study the conformance for the A/F deviations $e_{A/F}$ for the \emph{detailed} and \emph{abstract} models for an RPM of $1600$ (the system input).
When the nominal input RPM is subject to Gaussian system noise $\mathbf{N} (0,18^2)$, (samples of) the change of $e_{A/F}$ over time for the two models are given in \cref{fig:1}.
The conformance requires that, under this system noise, the A/F deviations $e_{A/F}$ of the \emph{detailed} and \emph{abstract} models enter some desired working region ($|e_{A/F}| < 0.05$) in any time interval $[0.22,\tau]$ with approximately the same probability; i.e., the STL specification $\F_{[0.22,\tau]}( |e_{A/F}| < 0.05 )$ holds with approximately the same probability for any $\tau$ between the two models, as formally captured below%
\footnote{More precisely, for any $\tau \geq 0.22$ from \eqref{eq:pt}. Otherwise, the satisfaction probability is trivially $0$.}
\begin{equation}
\label{eq:pt}
\begin{split}
\forall \tau \geq 0. \
& \pr_{{\sig_a} \sim \sys_a} \big({\sig_a} \models \F_{[0.22,\tau]}( |e_{A/F}| < 0.05 )\big)
\\ & \approx_c \pr_{{\sig_f} \sim \sys_f}\big({\sig_f} \models \F_{[0.22,\tau]}( |e_{A/F}| < 0.05 )\big).
\end{split}
\end{equation}
Here, the constant $c > 0$, the approximate equality $\approx_c$ means the difference is less than $c$, the subscripts $f$ and $a$ stand for the complex and abstracted models, respectively, $e_{A/F}$ is the percentage deviation of A/F ratio, and $\tau$ is the time-bound.

\paragraph*{Result Analysis} 
We analyzed \eqref{eq:pt} using \cref{alg:smc} with the confidence level $\alpha \in \{0.95, 0.99\}$ and the conformance parameter $c \in \{0.2, \allowbreak 0.15, 0.10, 0.05\}$ (see \cref{tb:1}). 
The results are derived with relatively small numbers of samples for all confidence and indifference parameters.
The results indicate that the two employed models do not conform for the requirement~\eqref{eq:pt}, although it is claimed in~\cite{jin_PowertrainControlVerification_2014} that the abstract model is a representative of the detailed model.
Starting from the same initial RPM values, the A/F ratio in the complex  model would take more time to reach the desired working region than in most  cases in the abstracted model. 
This also agrees with \cref{fig:1}, as the A/F ratio of the abstracted model would remain inside the desired area, while in the complex model, this value exceeds the desired region in most of the cases.
Furthermore, from \cref{tb:1}, the value of the test statistics $\delta_{n,m} $ is almost $1$ in all the cases, when \cref{alg:smc} terminates.
This implies that for the detailed and abstracted models, the distribution of the startup time for their A/F ratio to reach the working region are very different --- this agrees with the algorithm assertion.

\begin{table}[!t]
\centering
\setlength\tabcolsep{2mm}
\begin{tabular}{llcllc}
	$c$ & $\alpha_d$ & $\delta_{n,m} $ & Samples &  Time~(sec.)&$\assert$ \\
	\toprule
	0.40 &0.99 &1.00 &3.9e+01 &1.8e-02 &$\False$ \\
    0.40 &0.95 &1.00 &1.9e+01 &4.4e-03 &$\False$ \\
    0.25 &0.99 &1.00 &2.5e+01 &4.6e-03 &$\False$ \\
    0.25 &0.95 &1.00 &1.3e+01 &2.2e-03 &$\False$ \\
    0.10 &0.99 &1.00 &1.8e+01 &3.6e-03 &$\False$ \\
    0.10 &0.95 &1.00 &9.0e+00 &1.6e-03 &$\False$ \\
    0.05 &0.99 &1.00 &1.6e+01 &2.8e-03 &$\False$ \\
    0.05 &0.95 &1.00 &8.0e+00 &1.3e-03 &$\False$ \\
	\bottomrule
\end{tabular}
\caption{{Statistical verification results of the conformance property~\eqref{eq:pt} 
and the test statistics $\delta_{n,m} $ upon \cref{alg:smc} termination for different values of conformance parameter $c$ and desired confidence level $\alpha_d$.} \label{tb:1}}
\vspace{-10pt}
\end{table}

\subsection{Replacing MPC with NN-based~Controllers}
\label{sub:mpc}

The controller of the LKA system is commonly based on 
model predictive control (MPC) or more recently neural networks (NN).
The conventional MPC-based controllers solve a constrained quadratic programming optimization problem from the observed state of a plant in an open-loop fashion.
This approach is usually computationally ineffective in realtime. 
Recently, NN-based controllers are employed to imitate the control rules of the MPC-based controller from samples to improve realtime computation efficiency. 
In this case study, we check the conformance of an NN-based controller and an MPC-based controller for the LKA system in MATLAB/Simulink~\cite{LKA-ref}. 

\begin{figure*}[t]
\centering
\includegraphics[width = 1.5\columnwidth]{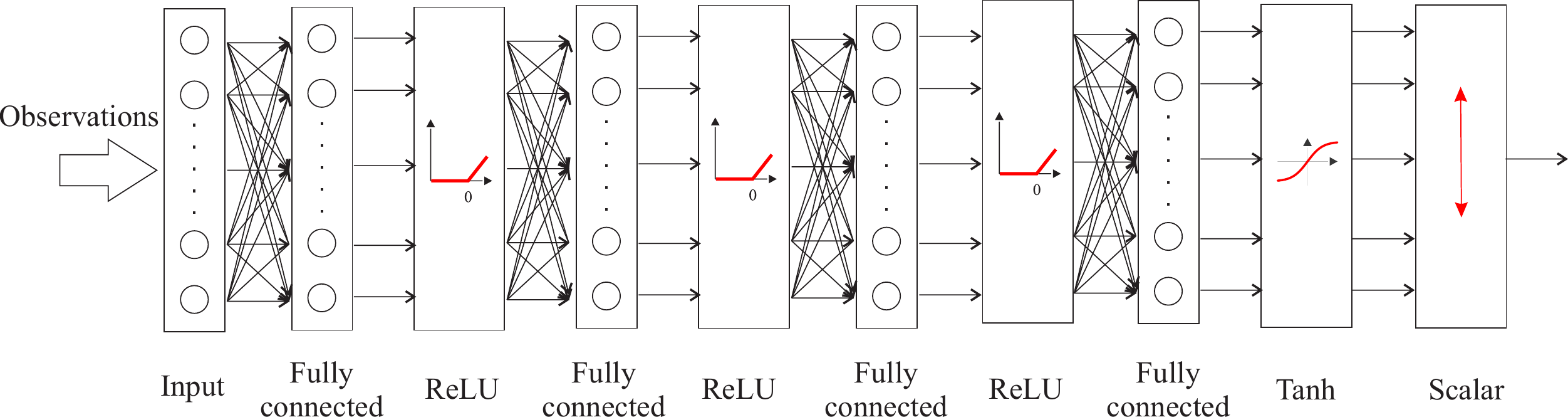}
\caption{The employed structure for the NN controllers.}
\label{fig:2}
\end{figure*}

In the LKA system, the sensors measure the lateral deviation, relative yaw angle between the center-line of a lane and the vehicle, current lane curvature, and its derivative. 
The objective of the controller is to keep the lateral error and relative yaw angle close to zero.
To dynamics of the vehicle is given by the three Degrees-of-Freedom (DoF) bicycle model~\cite{gillespie1992fundamentals}~as
\begin{align}
\nonumber\begin{bmatrix}\dot{V}_y \\\ddot{\psi}\end{bmatrix} =& \begin{bmatrix}
-\frac{2C_f+2C_r}{m V_x} & -V_x-\frac{2C_fl_f-2C_r l_r}{m V_x} \\
     -\frac{2C_f l_f-2C_r l_r}{I_z V_x} & -\frac{2 C_f l_f^2+2C_r l_r^2}{I_z V_x}
\end{bmatrix} \begin{bmatrix}V_y\\\dot{\psi}\end{bmatrix} + 2\begin{bmatrix}  \frac{C_f}{m}\\  \frac{C_f l_f}{I_z}\end{bmatrix} u \\
\nonumber y = & \begin{bmatrix}V_y & \dot{\psi}\end{bmatrix}^\mathrm{T}.
\end{align}
Here, $V_x$ is the longitudinal velocity, $m$ is the total vehicle mass, $I_z$ is the yaw moment of inertia of the vehicle, $l_f$ and $l_r$ are the longitudinal distance from the center of gravity to the front and real tires, and $C_f$ and $C_r$ are the cornering stiffness of the front and rear tires, respectively. The system state consist of the lateral 
velocity $V_y$ and yaw angle rate $\dot{\psi}$, and the front steering angle $u(t)$ is the system input.

\paragraph*{MPC} 
The MPC-based controller is derived from the MPC 
toolbox in MATLAB.
The values of the variables are set as follows: 
$V_x = 15 \ m/s$, 
$m = 1575 \ kg$, 
$I_z = 2875 \ m \cdot N \cdot s^2$, 
$l_f = 1.2 \ m$, 
$l_r = 1.6 \ m$, 
$C_f = 19000 \ N/\mathrm{rad}$, 
and $C_r=33000 \ N/\mathrm{rad}$. 
The controller output is confined within the interval $[-\pi/3,\pi/3] \ \mathrm{rad}$. 
The predictive time horizon and control time horizon are set to $h_p = 20$ and $h_c = 20$.

\paragraph*{DNN Replacement} 
We train a NN controller to replace the 
MPC controller,
by sampling from the MPC based controller 
for randomly generated states, 
last control action, and measured disturbances. 
The samples are divided into the train and validation testing data,
and are used to train several NNs with similar structure, 
but different numbers of neurons per 
layer (30, 45, 60, and 300 neurons per layer). 
All middle layers are fully 
connected with ReLU activation functions and the output layer is a fully-connected layer with $\tanh$ activation function and a scalar layer. 
The maximal number of epoch to stop the training is set to $30$. 
The structure of the NNs is shown in \cref{fig:2}.

\paragraph*{Conformance}

For the input of the same reference path of the vehicle (given by the Matlab Toolbox), we expect that using the NN controller the lateral deviation of the vehicle under random values of the initial states should be similar to the output of the MPC-based closed-loop system.
Thus, we assign an upper bound to the error of the lateral deviation and check when the designed controller reaches this boundary. 
With fixed values of initial states, we run the closed-loop system with two NNs and the reference MPC. 
Then, we compare the time that the absolute value of the lateral deviation falls below the desired value for the NN controller and the MPC controller;
this is formally captured by the STL formula
$\F_{[0,\tau]}(|e_y| < \gamma )$ 
monotonically parametrized by $\tau$.
Accordingly, the conformance between the MPC-controlled and NN-controlled
LKA systems for this parametrized specification is
\begin{equation}\label{eq:phi2}
\begin{split}
\forall \tau \geq 0. \ 
& \pr_{\sig_1 \sim \sys_{\mathsf{NN}}} (\sig_1 \models 
\F_{[0,\tau]}(|e_y^{\mathsf{NN}}| < \gamma)) 
\\ & 
\approx_c
\pr_{\sig_2 \sim \sys_{\mathsf{MPC}}}({\sig_2} 
\models \F_{[0,\tau]}(|e_y^{\mathsf{MPC}}| < \gamma)),
\end{split}
\end{equation}
where the constants $c, \gamma > 0$,
the approximate equality $\approx_c$ means the difference is less than $c$,
and $e_y$ is the lateral deviation of the intended controller. 
The random signals $\sig_1$ and $\sig_2$ are derived as follows. The initial conditions of the system such as the lateral velocity $V_y$, yaw angle rate $\dot{\psi}$, lateral deviation $e_1$, relative yaw angle $e_2$, last 
steering angle $u$, and the measured road yaw rate $V_x \rho$ are drawn randomly using the uniform distribution from intervals 
$[-2, 2] \ m/s$, 
$[-\pi/3, \pi/3] \ {\mathrm{rad}}/s$, 
$[-1,1]m$, 
$[-\pi/4, \pi/4] \ {\mathrm{rad}}$, 
$[-\pi/3, \pi/3] \ {\mathrm{rad}}$, 
and $[-0.01, 0.01]$, 
respectively. 
The minimum road reduce is $100 \ m$.

\paragraph*{Result Analysis} 
The results for applying~\cref{alg:smc} with parameters $\alpha\in\{0.95, 0.99\}$, and $c \in \{0.40,  0.25, \allowbreak 0.10, 0.05\}$ are shown in \cref{tb:2} for NN controllers with $30$ and $45$ neurons per layer. As can be seen, the NN controllers with $45$ neurons per layer conforms much better with the MPC controller than the NN controllers with $30$ neurons per layer for the requirement~\eqref{eq:phi2}.
The results for $60$ and $300$ neurons per layer are similar to $45$ neurons per layer (as confirmed by \cref{fig:nn}), 
so they are omitted due to the space limit. 
All these results are achieved with relatively few samples (at most a few thousand samples for each~setup).
\begin{table*}[!t]
	\centering
	\setlength\tabcolsep{2mm}
	\begin{tabular}{crcrcc||crcc}
	   \multicolumn{6}{c}{NN (30 Neurons per Layer)} & \multicolumn{4}{c}{NN (45 Neurons per Layer)}\\
	   \toprule
		$c$ & $\alpha_d$ &$\delta_{n,m} $ & Samples & $T(s)$ &
$\assert$ & $\delta_{n,m} $ &Samples & $T(s)$ &$\assert$\\
		\toprule
		0.40 &0.99 &0.98 &4.3e+01 &7.4e-03 &$\False$ &0.36 &1.0e+04 &9.6e+00 &$\True$ \\
        0.40 &0.95 &1.00 &1.9e+01 &3.1e-03 &$\False$ &0.36 &3.6e+03 &2.0e+00 &$\True$ \\
        0.25 &0.99 &1.00 &2.5e+01 &4.1e-03 &$\False$ &0.37 &9.5e+02 &3.2e-01 &$\False$ \\
        0.25 &0.95 &1.00 &1.3e+01 &2.1e-03 &$\False$ &0.42 &2.5e+02 &5.9e-02 &$\False$ \\
        0.10 &0.99 &1.00 &1.8e+01 &3.0e-03 &$\False$ &0.36 &2.1e+02 &4.2e-02 &$\False$ \\
        0.10 &0.95 &1.00 &9.0e+00 &1.6e-03 &$\False$ &0.35 &1.2e+02 &2.2e-02 &$\False$ \\
        0.05 &0.99 &1.00 &1.6e+01 &2.7e-03 &$\False$ &0.38 &1.3e+02 &2.5e-02 &$\False$ \\
        0.05 &0.95 &1.00 &8.0e+00 &1.2e-03 &$\False$ &0.36 &7.3e+01 &1.4e-02 &$\False$ \\
		\bottomrule
	\end{tabular}
\caption{Statistical verification results 
for the conformance property \eqref{eq:phi2} 
and the test statistics $\delta_{n,m} $ upon \cref{alg:smc} termination for different values of the conformance parameter $c$ and desired confidence level $\alpha_d$.}
\label{tb:2}
\vspace{-16pt}
\end{table*}

The results of \cref{tb:2} imply that increasing the size of the NN-based controllers improves the conformance with the MPC controller. To check this observation and confirm the results of \cref{tb:2}, we show in \cref{fig:nn} the ECDFs of the settling time for the MPC controller and the NN controllers with $30$, $45$, $60$, and $300$ neurons per layer; the conformance for the requirement \eqref{eq:phi2} is visually demonstrated by the closeness of the ECDFs.
To derive the same conclusion, each ECDF uses $200$ samples, which is significantly more 
than the samples required by \cref{alg:smc}, as shown in~\cref{tb:2}.
As shown in~\cref{fig:nn}, increasing the number of neurons beyond 45 does not lead to considerable change in the CDF of the settling times for NN based controllers. 
Comparing to NN$_{300}$, the NN$_{60}$ controller has better conformance with the MPC. The latter implies that NN$_{300}$ controller has the over-fitting problem. 
For the NN-based controllers of different sizes, 
the test statistics upon algorithm termination is 
$\delta_{n,m} ^{\mathsf{NN}_{30}} = 0.98$, 
$\delta_{n,m} ^{\mathsf{NN}_{45}} = 0.31$, 
$\delta_{n,m} ^{\mathsf{NN}_{60}} = 0.31$,
and $\delta_{n,m} ^{\mathsf{NN}_{300}} = 0.35$.

\begin{figure}[!t]
\centering
\includegraphics[width = 0.9\columnwidth]{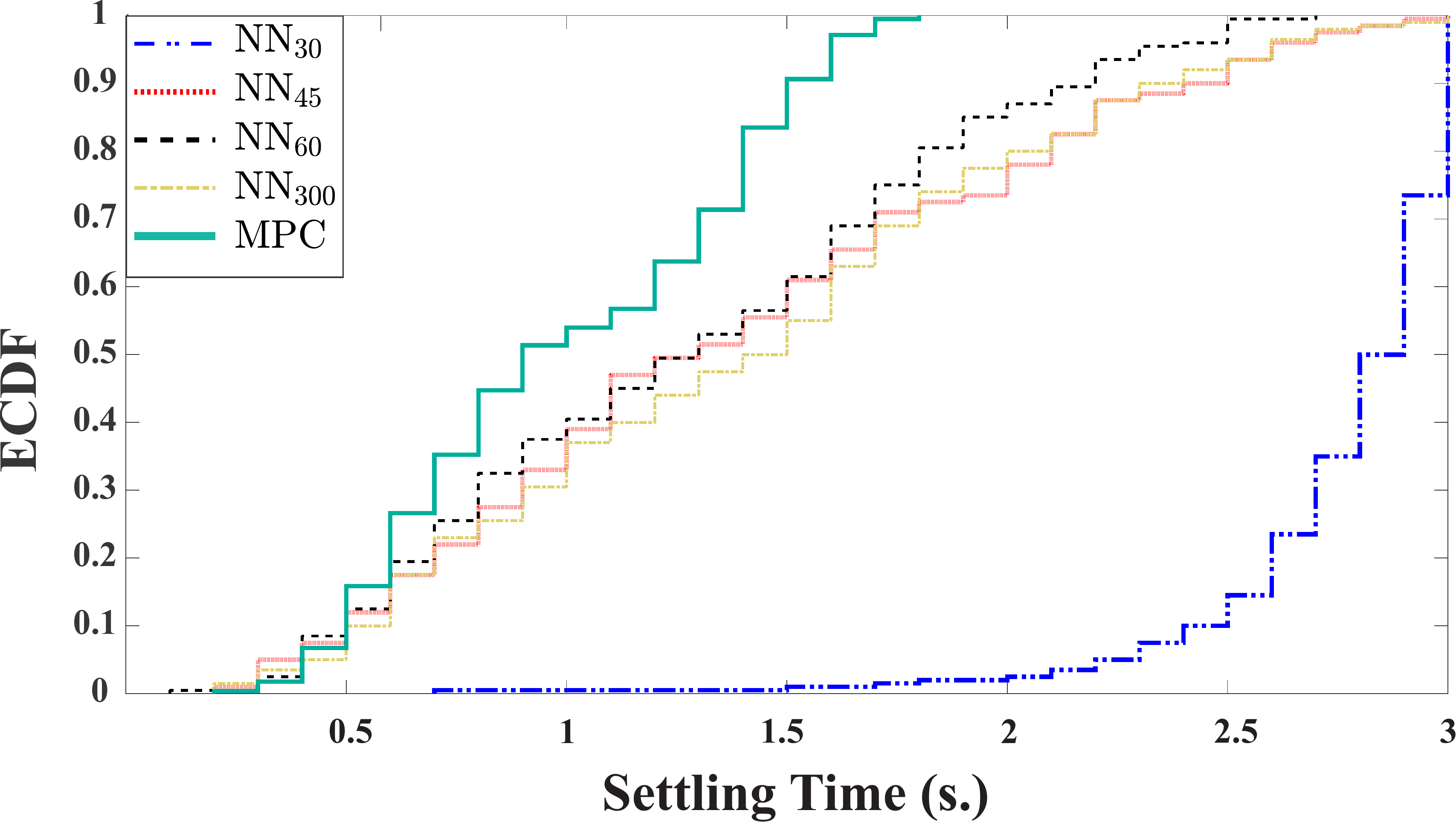}
\caption{{The ECDFs of the settling time for the MPC and NN-based controllers with with $30$, $45$, $60$, and $3000$ neurons per layer (NN$_{30}$, NN$_{45}$, NN$_{60}$, and NN$_{300}$) from $200$ samples. The conformance is visually demonstrated by the closeness of the ECDFs.}}
\label{fig:nn}
% \vspace{-17pt}
\end{figure}

\section{Related Work} 
\label{sec:related}

Conformance of CPS for different types of specifications 
of interest is studied in~\cite{ryabtsev2009translation,
majumdar2013compositional,
khakpour_NotionsConformanceTesting_2015,
roehm_ReachsetConformanceTesting_2016,
liu_ReachsetConformanceForward_2018,
graf2019component}.
As in~\cite{abbas_FormalPropertyVerification_2014,
deshmukh_QuantifyingConformanceUsing_2017},
in this work, we focus on a class of conformance properties for CPS that are specified by temporal logic formulas.
Our notion of conformance can be viewed as the 
probabilistic extension of 
\cite{abbas_FormalPropertyVerification_2014,
deshmukh_QuantifyingConformanceUsing_2017},
that is needed to allow for capturing the conformance between a wide class of 
probabilistic CPS (which we model as PUSs).
Since reachability properties can  be in general captured by
temporal logic formulas, our notion of conformance 
is more general than 
the conformance for reachability from~\cite{roehm_ReachsetConformanceTesting_2016,
liu_ReachsetConformanceForward_2018}. %\todo{check}

Existing works on conformance 
for temporal logic specifications
mainly focus on non-probabilistic models~\cite{abbas_FormalPropertyVerification_2014,khakpour_NotionsConformanceTesting_2015,graf2019component,deshmukh_QuantifyingConformanceUsing_2017,liu_ReachsetConformanceForward_2018}. 
On the other hand, in this work, we focus on a probabilistic notion of 
conformance -- the satisfaction probability of the specifications of interest
should be approximately equal.
In~\cite{abbas_FormalPropertyVerification_2014,deshmukh_QuantifyingConformanceUsing_2017}, 
conformance builds a relation between two models 
such that if \emph{any} STL formula holds on one model, then the corresponding formula should automatically hold on the other model. 
Conceptually, our notion of conformance is less stringent, as it only involves a given set of STL formulas of interest. 
Furthermore, our notion of conformance is conceptually more general than~\cite{roehm_ReachsetConformanceTesting_2016,liu_ReachsetConformanceForward_2018},
where the conformance is only for reachability.
Our notion of conformance can specify the conformance of probabilistic 
reachability for two models.

Conformance is different from 
the simulation/bisimulation~\cite{damsAbstractionAbstractionRefinement2018} in two aspects.
Conceptually, conformance focuses on the level of functionality, and only captures the similarity between two models for a set of specifications of interest.
That is, the behavior of the two models may be very different 
for other specifications (not of interest).
On the other hand, the simulation/bisimulation focuses 
on the level of executions, and requires an execution-wise correspondence 
between the two models.
Also, the two concepts have slightly different 
domains of applications~\cite{khakpour_NotionsConformanceTesting_2015,deshmukh_QuantifyingConformanceUsing_2017,
abbas_FormalPropertyVerification_2014}.
Conformance is commonly only used for cyber-physical and embedded control 
systems, 
while simulation/bisimulation may be used for both 
discrete models~\cite{damsAbstractionAbstractionRefinement2018} as well as 
cyber-physical and embedded 
control~systems~\cite{julius_ApproximationsStochasticHybrid_2009,
wang_StatisticalVerificationDynamical_2015}.
% \todo{what do you mean by the last sentence?}

To the best of our knowledge, this is the first work on 
statistically verifying the probabilistic conformance of CPS with complex dynamics (formally captured as probabilistic 
uncertain systems from \cref{def:PUS}), while providing provable confidence levels (i.e., false positive/negative ratios).
Existing model-based methods for conformance,
such as 
\cite{abbas_FormalPropertyVerification_2014,
khakpour_NotionsConformanceTesting_2015,
deshmukh_QuantifyingConformanceUsing_2017,
liu_ReachsetConformanceForward_2018}
cannot directly handle such systems
with complex or even unknown dynamics in practice. 
On the other hand, existing conformance testing methods
for temporal logic specifications
\cite{roehm_ReachsetConformanceTesting_2016,
graf2019component}
or other specifications~\cite{ryabtsev2009translation,majumdar2013compositional}
cannot provide probabilistic guarantees like the presented method.
Therefore, those methods are not directly comparable
with ours for the case studies presented in \cref{sec:eval}.

\section{Conclusion}
\label{sec:conc}

In this paper, we proposed a new concept of {\em probabilistic conformance} for CPS. This notion is based on approximately equal satisfaction probabilities for a given (infinite) set of signal temporal logic (STL) formulas.
We introduced a verification algorithm for the probabilistic conformance of grey-box CPS, modeled by probabilistic uncertain systems. Our statistical verification algorithm is based on a new statistical test that can check if two probability distributions are equal for any desired confidence level (lower than~$1$). 
Finally, we used our approach to verify (1)~the nonconformity in the startup time of the full and simplified models of the Toyota powertrain system, (2)~the approximate conformity in the settling time of the model predictive control (MPC) based lane-keeping controller and neural network (NN)-based lane-keeping controllers of sufficient sizes, and
(3)~the nonconformity in the maximal DC voltage deviation of the full and simplified model of a power grid~system. An avenue for future work is to support conformance verification of systems for security/privacy policies that are {\em hyperproperties}. Besides, there is a need to go beyond verification and develop techniques to identify system behaviors that result in nonconformity.

\begin{acks}
This work is sponsored in part by the ONR under agreements N00014-17-1-2504 and N00014-20-1-2745, AFOSR under award number FA9550-19-1-0169, as well as the NSF CNS-1652544 and SaTC-1813388 awards.
\end{acks}

\bibliographystyle{ACM-Reference-Format}
\bibliography{ref}

\begin{appendix}

\section{Appendix}

\subsection{Power Plant Case-Study}
\label{sub:power}

In the final case-study, we compare the \emph{detailed} 
and \emph{average} models of a $100kW$ array connected to a $25kV$ grid via a DC-DC boost converter and a 
three-phase three-level Voltage Source Converter~(VSC), from the MATLAB Simscape Electrical Toolbox~\cite{PowerSim}. 
{Both models include a Photo Voltaic (PV) array that delivers the maximum power of $100 \ kW$ at $1000 \ W/m^2$ sun irradiance,
a DC-DC boost converter, 
3-level 3-phase VSC, 
capacitor bank, 
three-phase coupling transformer, 
and a given utility grid.}
{The models use the Simulink model of a boost converter to implement the Maximum Power Point Tracking (MPPT). The MPPT optimizes the match between the solar array (PV panels) and the utility grid. The models have differences such as employed technique to implement MPPT, DC-DC, and VSC converters' structure %. For more details on these models, the readers are encouraged to consult with~
\cite{de2011comparative}.}

The VSC converts the $500V$ DC link voltage to $260V$ AC and keeps unity 
power factor. 
To this end, two control loops are employed: one control loop 
regulates DC link voltage to $\pm250V$ (external controller) and the other 
control loop regulates active and reactive grid currents (internal controller). 
The active current reference is the output of the DC voltage external 
controller. The latter controller is a PI (proportional–integral) controller whose input is the error of  the DC voltage. 

% Thus, we focus on the deviation of the DC voltage for both models. 

\paragraph*{Conformance}

We consider the deviation of the DC voltage $e_{vdc}$, 
when the sun irradiance and environment temperature are subject to changes. 
For an arbitrary threshold~$\gamma$, we use the STL specification 
$\G_{[0.5,2]}(|e_{vdc} | < \gamma)$,
which is monotonically parametrized by $\gamma$,
to capture that $e_{vdc}$ is always below $\gamma$
within the time interval $[0.5,2]$ of interest.
Accordingly, the conformance between the detailed and average models
for this parametrized specification is captured by
\begin{equation}\label{eq:phi3}
\begin{split}
\forall \gamma \geq 0. \
& \pr_{{\sig_d} \sim \sys_d} ({\sig_d} \models \G_{[0.5,2]}( |e_{vdc_{d}} 
| < \gamma) ) 
\\ & 
\approx_c
\pr_{{\sig_a} \sim \sys_a}({\sig_a} \models 
\G_{[0.5,2]}(|e_{vdc_{a}} | < \gamma) ),
\end{split}
\end{equation}
where the constant $c > 0$,
the approximate equality $\approx_c$ means the difference is less than $c$,
and the detailed and average models are denoted by $d$ and $a$, respectively.

\begin{figure}[t]
    \centering
    \includegraphics[width = 0.9\columnwidth]{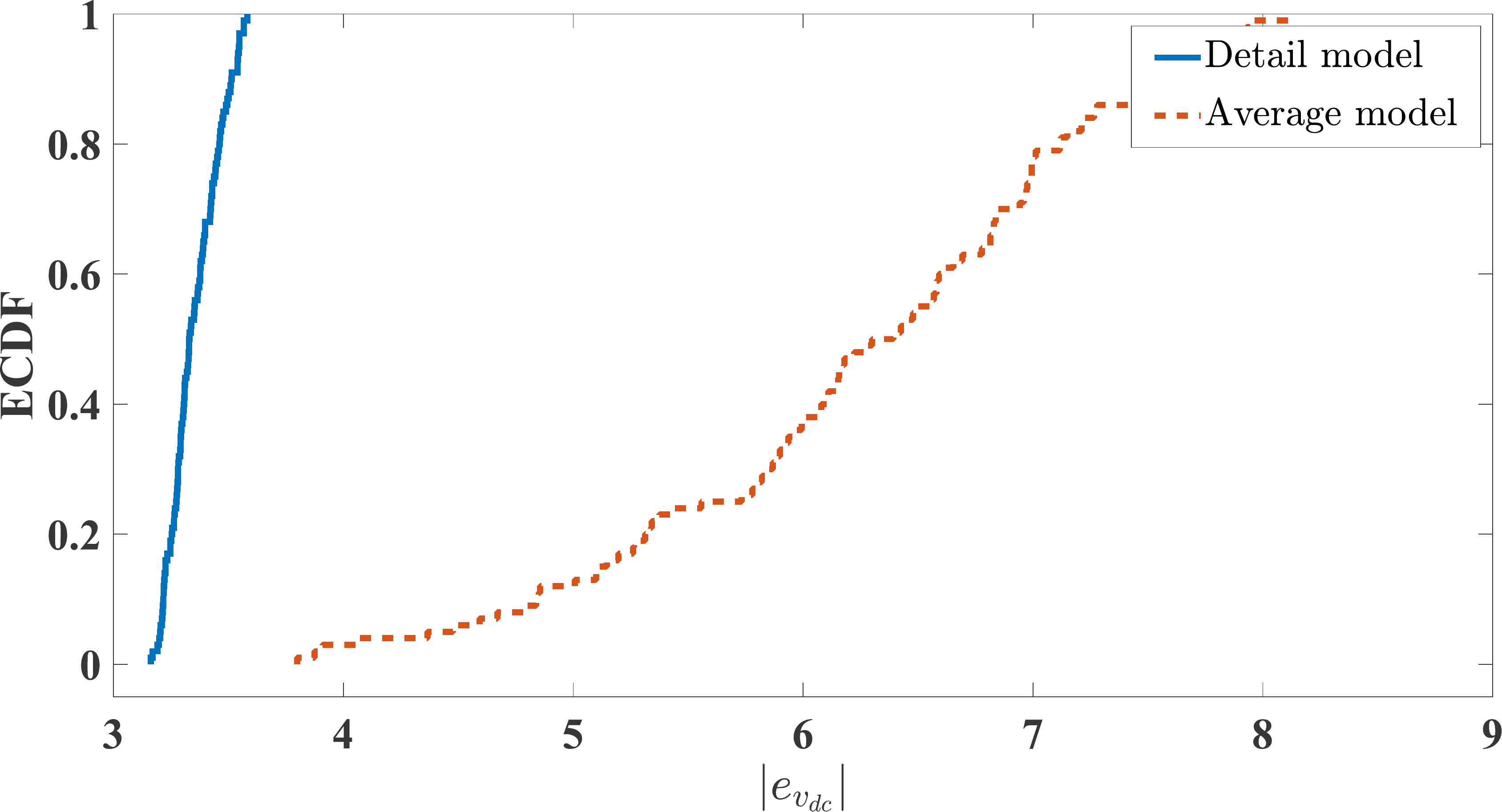}
    \caption{{The ECDFs for the maximum deviation of $V_{dc}$ in the detailed and average models for $100$ samples. The two distributions of the maximum errors for the models are noticeably~different.}}
    \label{fig:pp}
\end{figure}

We applied~\cref{alg:smc} with parameters
$\alpha\in\{0.95, 0.99\}$ and 
$c \in \{0.001, 0.005, 0.01, 0.05\}$. 
For both the models, we consider the standard test conditions 
(initial temperature and irradiance are $25^\circ$ and $1000~W/m^2$, respectively) 
with the following scenario (i.e., the input to the models):
\begin{enumerate}
    \item At $t=0.3s$ MPPT starts to regulate PV voltage.
    
    \item In time interval $[0.6, 1.1]s$, the sun irradiance linearly is ramped 
     to a minimum value. Also, the environment temperature start increasing to a 
     maximum value, simultaneously.
     
    \item In time interval $[1.1, 1.2]s$, the sun irradiance and 
    environment temperature stay constant. The minimum value of the irradiance is drawn randomly from a distribution $\mathbf{N}_{ir} (650,10^2)$
    and the maximum temperature is $20 - 0.02\times \mathbf{N}_{ir} (650,10^2)$.  
    
    \item In time interval $[1.2, 1.7]s$, the sun irradiance and 
    temperature are linearly restored back to $1000W/m^2$ and $25^\circ$, respectively; from 
    then onward, remain constant. 
\end{enumerate}

\begin{table}[!t]
\centering
\setlength\tabcolsep{2mm}
\begin{tabular}{llcllc}
$c$ & $\alpha_d$& $\delta_{n,m}$& Samples &  Time~(sec.)&$\assert$ \\
\toprule
0.40 &0.99 &1.00 &3.9e+01 &1.0e-02 &$\False$ \\
0.40 &0.95 &1.00 &1.9e+01 &6.9e-03 &$\False$ \\
0.25 &0.99 &1.00 &2.5e+01 &5.3e-03 &$\False$ \\
0.25 &0.95 &1.00 &1.3e+01 &3.3e-03 &$\False$ \\
0.10 &0.99 &1.00 &1.8e+01 &3.8e-03 &$\False$ \\
0.10 &0.95 &1.00 &9.0e+00 &1.8e-03 &$\False$ \\
0.05 &0.99 &0.94 &1.8e+01 &3.2e-03 &$\False$ \\
0.05 &0.95 &1.00 &8.0e+00 &1.3e-03 &$\False$ \\
\bottomrule
\end{tabular}
\caption{{Statistical verification results 
of the conformance property \eqref{eq:phi3} and the test statistics $\delta_{n,m}$ upon \cref{alg:smc} termination,
for different conformance parameter $c$ and desired
confidence level $\alpha_d$.}}
\label{tb:3}
\end{table}

\paragraph*{Result Analysis}  
\cref{tb:3} contains the results that demonstrate the nonconformance of the \emph{detailed} and \emph{average} models for the requirement \eqref{eq:phi3}, 
although it is commonly believed that 
the average model is generally a good approximation of the detailed model
\cite{PowerSim}.
This result is achieved with a relatively small number of samples (at most a few dozen samples for each setup).
The results for the considered specification reveals that two models do not have conformance for any values of $c$. 
To confirm the results of \cref{tb:3},
we plot in \cref{fig:pp} the ECDFs of 
the maximum deviation $|e_{V_{dc}}|$ of 
the detailed and average models;
the discrepancy of the two ECDFs demonstrates 
the nonconformance of two models for the requirement \eqref{eq:phi3}.
Each ECDF uses $100$ samples, which is significantly more 
than the samples required by \cref{alg:smc}, as shown by \cref{tb:3}.

\end{appendix}

\end{document}